%%%%%%%Inizio lavoro
%%Per usare il formato PRA,B,C,D,E,L usa una delle due righe che seguono
%%Per un formato su una sola colonna usa le due righe successive
%\documentclass[floatfix,nofootinbib,superscriptaddress,twocolumn,pre,showpacs,showkeys,preprintnumbers]{revtex4-1}
%\documentclass[superscriptaddress,oncolumn,pre,showpacs,preprintnumbers]{revtex4-1}
\documentclass[12pt]{article}
\usepackage{newinutile-accg}
\usepackage{amsmath,amsfonts,amsthm,amssymb}
\usepackage{epsfig,graphics,color,calc,graphicx}
\usepackage{pstricks}
\usepackage{cite}
\usepackage{mathrsfs}
%\usepackage{simpsons}
%\usepackage{draftwatermark}
%\SetWatermarkText{DRAFT}
%\SetWatermarkScale{5}

% Theorems and equations: da commentare se si usa newinutile
\newtheorem{theorem}{Theorem}

\newtheorem{lemma}{Lemma}
\newtheorem{remark}{Remark}

%%% Macro

%\newcommand{\qed}{\hfill $\Box$\bigskip}
% Roman
\newcommand{\rr}[1]{{\normalfont\textrm{#1}}}
%Colori

% Calligrafiche

% Blackdoard boldface
\newcommand{\bb}[1]{{\mathbb{#1}}}

%%% Macro locali
\newcommand{\Len}{L}
\newcommand{\uf}{v}
\newcommand{\uk}{u}
\newcommand{\Df}{K}
\newcommand{\Dk}{D}
\newcommand{\Lf}{b}
\newcommand{\Lk}{a}

% Definizione commento
\newlength{\pecettawidth}
\setlength{\pecettawidth}{\textwidth}
\addtolength{\pecettawidth}{-10cm}
%\addtolength{\pecettawidth}{-1cm}

\begin{document}
%%% Font: commenta la riga che segue se vuoi i font standard
%\fontfamily{ppl}\selectfont
%\logo
\title{Fick and Fokker--Planck diffusion law in inhomogeneous media}

\author{Daniele Andreucci}
\email{daniele.andreucci@sbai.uniroma1.it}
\affiliation{Dipartimento di Scienze di Base e Applicate per l'Ingegneria,
             Sapienza Universit\`a di Roma,
             via A.\ Scarpa 16, I--00161, Roma, Italy.}

\author{Emilio N.M.\ Cirillo}
\email{emilio.cirillo@uniroma1.it}
\affiliation{Dipartimento di Scienze di Base e Applicate per l'Ingegneria,
             Sapienza Universit\`a di Roma,
             via A.\ Scarpa 16, I--00161, Roma, Italy.}
%\thanks{ENMC acknowledges Eurandom for the kind hospitality.}

\author{Matteo Colangeli}
\email{matteo.colangeli1@univaq.it}
\affiliation{Dipartimento di Ingegneria e Scienze dell'Informazione e
Matematica, Universit\`a degli studi dell'Aquila, via Vetoio,
67100 L'Aquila, Italy.}

\author{Davide Gabrielli}
\email{gabriell@univaq.it}
\affiliation{Dipartimento di Ingegneria e Scienze dell'Informazione e
Matematica, Universit\`a degli studi dell'Aquila, via Vetoio,
67100 L'Aquila, Italy.}

%\thanks{The authors thanks....}

\begin{abstract}
We discuss diffusion of particles in a spatially inhomogeneous medium.
From the microscopic viewpoint we consider independent particles randomly
evolving on a lattice. We show that the reversibility condition has a discrete
geometric interpretation in terms of weights associated to un--oriented edges
and vertices.
We consider the hydrodynamic diffusive scaling that gives, as a macroscopic evolution equation, the Fokker--Planck
equation corresponding to the evolution of the probability distribution
of a reversible spatially inhomogeneous diffusion process. The geometric
macroscopic counterpart of reversibility is encoded into a tensor metrics
and a positive function.
The Fick's law with inhomogeneous diffusion matrix is obtained in the
case when the spatial inhomogeneity is associated exclusively with the
edge weights.
We discuss also some related properties of the systems like a
non--homogeneous Einstein relation and the possibility of uphill diffusion.
\end{abstract}

\pacs{02.30.Jr, 02.50.Ey, 05.60.Cd}

\keywords{Diffusion; Fick's law; Fokker--Planck diffusion law; hydrodynamic
limit.}
%\preprint{Appunti: \today}

\ams{35Q84, 82C22, 82C31}

%\noindent
%\textbf{MSC2000:} 82B28; 82B44; 60K35.
%\vfill

\maketitle

\section{Introduction}
\label{s:introduzione}
\par\noindent
The modelling of the
diffusion of a physical quantity encoded by a \emph{density field} $\rho(x,t)$
is usually constructed by assuming
a continuity equation
\begin{equation}
\label{int000}
\frac{\partial \rho}{\partial t}
=
%-\frac{\partial J}{\partial x}
-\nabla\cdot J
\end{equation}
expressed in terms of the flux vectorial field $J(x,t)$
and a relation between the flux and the density field.
The most popular choice is the \emph{Fick's law}
(see \cite{GDISP2006} for a very introductory discussion)
\begin{equation}
\label{int010}
%J=-D\frac{\partial \rho}{\partial x}
J=-D\nabla \rho
\;\;,
\end{equation}
where the positive function $D$
is called \textit{diffusion coefficient}.
In general $D=D(\rho,x)$. When there is a dependence on $\rho$
we obtain a nonlinear equation.
For spatially homogeneous systems $D$ does not depend on $x$.

Let us for simplicity consider the cases of a diffusion coefficient that
does not depend on $\rho$.
In many experimental situations
\cite{MBCS2005,ABC2014,CKMS2016,CKMSS2016,CDMR,CC2017,CCM1997,BG2005,LBLO2001,L1984,S1993,S2008}
one should consider
a not constant diffusion coefficient $D(x)$. In this cases it is not clear
if Fick's law is the correct equation expressing the connection between
the density and the flux fields. A different choice is the
\emph{Fokker--Planck diffusion law}
(see the books \cite{G2009,K1981} for an introduction to the
Fokker--Planck equation)
\begin{equation}
\label{int020}
%J=-\frac{\partial D\rho}{\partial x}
J=-\nabla(D\rho)
\end{equation}
which adds to the standard Fick's law a drift with velocity
$-\nabla D$, see Section~\ref{s:einst}.

In correspondence of these two different assumptions one finds
two possible equations for the diffusion problem
\begin{equation}
\label{int030}
\frac{\partial \rho}{\partial t}
=
%\frac{\partial}{\partial x}\Big[D\frac{\partial \rho}{\partial x}\Big]
\nabla\cdot(D\nabla \rho)
\end{equation}
and
\begin{equation}
\label{int040}
\frac{\partial \rho}{\partial t}
=
%\frac{\partial^2 Du}{\partial x^2}
\Delta(D\rho)
\end{equation}
which will be respectively called the \emph{Fick} and
the \emph{Fokker--Planck diffusion equation}; note that they
reduce to the same equation if $D$ is constant.

These two equations can be studied in $\Lambda\times[0,T]$
with $\Lambda\subset\bb{R}^d$ and $T>0$ with
$D\in C^2(\Lambda)$ and with initial condition
$\rho(x,0)=\rho_0(x)\in C^2(\Lambda)$.
Possible boundary conditions are Dirichlet or Neumann conditions on
$\partial\Lambda$. In case $\Lambda$ is a parallelepiped,
it is possible to consider periodic boundary conditions.
%on
%$Du$ for the Fokker--planck equation and on
%$D\nabla u$ for the Fick one.

In the applied science literature there are many situations in
which the two different points of view are assumed.
We just mention the paper \cite{SD2005} where the Fick's law is used
to study the transport of nutrients in cartilaginous
tissues and the paper \cite{MBCS2005} where it is discussed an
experiment in which a not uniform stationary density profile
is produced starting from a uniform distribution of particles
flowing inside a medium with not constant diffusion coefficient obtained
by adding gelatine to water.
This experimental
observation is obviously in contrast with the Fick's law prediction.

The fact is that, as clearly explained in \cite{S1993,SBBP1990},
the question ``what is the right generalization
of the Fick's law to inhomogeneous systems" is too naive.
A more detailed knowledge of the microscopic system is necessary
to model correctly the macroscopic behavior.
In \cite{S1993} the authors, in particular, discuss a
convincing and simple example based on two systems
in which a closed box contains a very dilute gas moving through
a dense mesh of iron wool. Model one: the iron wool
density is uniform and the box experiences a fixed temperature gradient
so that the typical particle speed varies continuously throughout
the box. Model two: the temperature is uniform, but the
iron wool density varies continuously in the box.
The systems are designed so that the effective diffusion
coefficient, which can be defined as the ratio between the square of
the mean free path and the mean free time, is the same function of
the space coordinates in the two systems.
The authors remark that, since the temperature is uniform
in box two and not uniform in box one they expect a stationary
uniform particle density distribution in box two and not uniform in
box one; indeed, they also deduce Fokker--Planck behavior
for the first model and Fick for the second.

Our work is very much in the spirit of \cite{S1993,SBBP1990}, indeed,
we assume the microscopic point of view
and prove that
two different models behave
in the hydrodynamic limit
\cite{DmP1991,KL1999}
respectively according to the Fick and the Fokker--Planck diffusion law.
In our modelling particles move in a discrete space and jump
from one site to another following an edge. We find the Fick's
behavior if the inhomogeneity is associated with edges and the Fokker--Planck
one if inhomogeneity is associated with sites.

Our modelling provides a deep physical interpretation
of the phenomenon, indeed, it suggests that
the Fokker--Planck's law is associated with locally
isotropic inhomogeneities,
whereas inhomogeneity accompanied to anisotropy results
into Fick's behavior.
More precisely,
suppose that in a small interval of time
the number of particles leaving a site of
the system is equally distributed among the edges intersecting that site,
then the macroscopic behavior is Fokker--Planck.
On the contrary,
suppose that the number of particles leaving a site are not
equally distributed among the edges intersecting that site,
but assume also that if two sites connected by an edge are occupied
by the same number of particles then the number of particles
moving along the bond in the two directions is equal.
In such a case the macroscopic behavior is Fick.
The second assumption assures that there is no preferred direction
along an edge, in particular it rules out the
possibility to have external fields acting on the system.

We note, finally, that our results are coherent with the simple example
discussed in \cite{S1993}.
Consider a small portion of volume in the box one, the number
of particles exiting the volume depends on its location
due to velocity gradient. But, since the wool mesh is
uniformly distributed, particles move with the same speed in all directions,
so that the system is locally isotropic and this, accordingly to our results,
implies the Fokker--Planck behavior.
On the other hand, in box two the non--uniformity of the
iron wool distribution breaks the local isotropy and this is
why the Fick's behavior is found.

As we mentioned above the main goal of the paper is the
derivation of the Fick and Fokker--Planck diffusion laws starting
from a microscopic model in which the spatial inhomogeneity
is differently implemented.
The paper contains also a final section in which we discuss
some relevant phenomena connected with inhomogeneous diffusion.
In particular, we note that coupling a Fick channel with a Fokker--Planck one
with suitable boundary conditions gives rise to the
phenomenon of uphill currents, in the sense that the current will
flow in the standard downhill direction in the Fick channel, namely,
from the higher density end to the lower density one,
whereas it will flow uphill in the Fokker--Planck channel.
Moreover, in the same section we discuss the validity of an
inhomogeneous Einstein relation.

The paper is organized as follows.
In Section~\ref{s:modello}
we introduce the microscopic model and discuss some elementary
properties connected to invariant measures.
In Section~\ref{s:idro} we first introduce the basic notions which
are needed to state our main result on the scaling limit which is,
indeed, stated in Section~\ref{s:idrodinamico}
and proven in Sections~\ref{s:mart}--\ref{s:uni}.
Some heuristics and numerical simulations are
given in Section~\ref{s:numerici}.
Finally, in Section~\ref{s:additional} we report some
additional remarks as the above mentioned uphill current
and Einstein relation.

\section{Models}
\label{s:modello}
\par\noindent
We discuss here the microscopic structure of our inhomogeneous media.

\subsection{Preliminaries}
\label{s:preliminari}
\par\noindent
At microscopic level we have a graph
with vertices $V$, and directed edges $E$. The corresponding set of unordered edges
is denoted by $F$. A generic directed edge is denoted by $(x,y)\in E$ while an undirected one by $\{x,y\}\in F$. We consider always finite graphs such that
if $\{x,y\}\in F$ then both $(x,y)$ and $(y,x)$ belong to $E$.

Two vertices $x,y\in V$ are said to be \emph{neighbors} if and only if
$\{x,y\}\in F$. We assume that the graph is \emph{connected}, namely,
for any pair of vertices $x,y\in V$ there exists a sequence
of unordered edges $e_1,\dots,e_n\in F$ such that $x\in e_1$, $y\in e_n$, and
$e_m\cap e_{m+1}\neq\emptyset$ for $m=1,\dots, n-1$.
For any $x\in V$ we let $C(x)\subset V$ be the set of vertices that are neighbors of $x$.
The directed graph $(V,E)$ is called strongly connected if for any pair of vertices
$x,y\in V$ there exists a directed path going from $x$ to $y$. We assume that our graphs
are always strongly connected.

\subsection{Random walks and particle systems}
\label{s:random}
\par\noindent
We consider one particle performing a Random Walk on the graph $(V, E)$ with rates
$r(x,y)>0$ when $(x,y)\in E$.
We say that the random walk is \emph{reversible} if and only if
there exists a probability measure
$\mu(x)$ on $V$ such that the
\emph{detailed balance condition}
\begin{equation}\label{db1}
\mu(x)r(x,y)=\mu(y)r(y,x)\,, \qquad \{x,y\}\in F
\end{equation}
is satisfied.
This condition can be satisfied only if $\{x,y\}\in F$ implies that
both $(x,y)$ and $(y,x)$ belong to $E$. We stress again that this will be always true. If the condition \eqref{db1} is satisfied then $\mu$ is invariant for the dynamics. This means that if the walker is distributed initially like $\mu$ its distribution does not change with time.

The \emph{inhomogeneous random walk} (IRW) is the Markov jump
process on the graph with transition rate from $x$ to $y$ given by
\begin{equation}\label{rr1}
r(x,y):=\alpha(x)Q(\{x,y\})\,.
\end{equation}
where
$\alpha:V\to\mathbb{R}_+$ and
$Q:F\to\mathbb{R}_+$ are arbitrary functions. We stress that $Q$ is a function on un-ordered edges
so that $Q(\{x,y\})=Q(\{y,x\})$.
To avoid irreducibility problems we assume that such functions are strictly positive.
Sometimes we shall consider two particular cases in which the
inhomogeneity is associated exclusively either with sites or
bonds.
The \emph{site inhomogeneous random walk} (SIRW) is the IRW with
$Q(e)=1$ for any $e\in F$ and
the \emph{edge inhomogeneous random walk} (EIRW) is the IRW with
$\alpha(x)=1$ for any $x\in V$.

\smallskip

We can pass from the case of one single particle to that of $M$
independent and indistinguishable particles letting $\eta(x)$ be
the number of particles at site $x\in V$ and considering
$\eta(x) r(x,y)$ as the rate at which one particle jumps from site
$x$ to site $y\in C(x)$.
More formally, a configuration of particles is
an element of the set $\Omega=\cup_{M=1}^{+\infty}\Omega_M$ with
$\Omega_M:=\{\eta\in\mathbb{N}^{V},\,\sum_{x\in V}\eta(x)=M\}$.
The value $\eta(x)$
is the number of particles at $x\in V$ and it is usually called the occupation variable at $x$.
If $x,y\in V$ and $\eta\in\Omega$ such that $\eta(x)\ge1$,
we denote by $\eta^{x,y}$ the configuration obtained by
$\eta$ letting one particle jump from $x$ to $y$.
This means that,
$\eta^{x,y}(x)=\eta(x)-1$ and $\eta^{x,y}(y)=\eta(y)+1$ while all the remaining occupation variables remain the same.
The stochastic evolution is encoded by the generator
\begin{equation}\label{pisiq}
\mathcal{L} f(\eta)=
\sum_{(x,y)\in E}c_{x,y}(\eta)\left[f(\eta^{x,y})-f(\eta)\right]\,,
\end{equation}
with
\begin{equation}\label{lmm}
c_{x,y}(\eta)=\eta(x)\alpha(x)Q(\{x,y\})
\end{equation}
and $f:\Omega\to\mathbb{R}$.
The trajectories $(\eta_s)_{s\in[0,t]}$ of this Markov process belong to the space $D([0,t], \Omega)$. This is the space of the maps
$\eta_{\cdot}:[0,t]\to \Omega$ that are right continuous and have limit from the left.
We endow this space by the Skorokhod topology \cite{Bil}.

In the following we will denote by $\mathbb P_\nu$ the probability measure on $D([0,t], \Omega)$ determined by the Markovian stochastic evolution
given by \eqref{pisiq} when the particles are distributed at time $0$ according to the measure $\nu$. The corresponding expected value will be denoted
by $\mathbb E_\nu$. The probability and the expected value with respect to a probability measure $\nu$ on $\Omega$ will be instead denoted
respectively by $E_\nu$ and $P_\nu$ (or simply $\nu$).

\subsection{Invariant measures}
\label{s:invariante}
\par\noindent
Let us first discuss the case of one single particle.
We claim that the class of all the reversible random walks on the graph $G$
indeed
coincides with the class of IRW.

\begin{lemma} \label{carrevw}
A random walk on $(V,E)$ is reversible if and only
if the rates of transition are of the form \eqref{rr1}.
Moreover the invariant measure is $\mu(x)=1/(\alpha(x)Z)$
where $Z=\sum_{y\in V}\alpha^{-1}(y)$ is a normalization constant.
\end{lemma}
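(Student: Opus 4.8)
The plan is to prove both implications directly, exploiting the symmetry of $Q$ on unordered edges in one direction and the strict positivity of the reversible measure in the other.

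First I would treat the \emph{if} direction: assume $r(x,y)=\alpha(x)Q(\{x,y\})$ and exhibit a reversible measure. The key observation is that, since $Q$ is defined on $F$, one has $Q(\{x,y\})=Q(\{y,x\})$, so the detailed balance equation \eqref{db1} for a candidate $\mu$ reads
\begin{equation*}
\mu(x)\alpha(x)Q(\{x,y\})=\mu(y)\alpha(y)Q(\{x,y\})
\end{equation*}
for every $\{x,y\}\in F$. Because $Q>0$ this is equivalent to $\mu(x)\alpha(x)=\mu(y)\alpha(y)$ on every edge, i.e.\ the quantity $\mu(x)\alpha(x)$ is constant along neighbors. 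Connectedness of the graph then forces $\mu(x)\alpha(x)$ to be constant on all of $V$, which suggests the ansatz $\mu(x)=c/\alpha(x)$. Imposing $\sum_{x}\mu(x)=1$ fixes $c=1/Z$ with $Z=\sum_{y}\alpha^{-1}(y)$, giving exactly $\mu(x)=1/(\alpha(x)Z)$. By the remark following \eqref{db1}, detailed balance implies invariance, so this $\mu$ is stationary.

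For the \emph{only if} direction I would start from a reversible walk with reversible measure $\mu$ and construct $\alpha$ and $Q$. Since the graph is strongly connected, the invariant measure of the irreducible chain is unique and strictly positive, so I may set $\alpha(x):=1/\mu(x)>0$ and $Q(\{x,y\}):=\mu(x)r(x,y)$. The point to check is that $Q$ is genuinely a function on unordered edges: this is precisely detailed balance, $\mu(x)r(x,y)=\mu(y)r(y,x)$, which says $Q(\{x,y\})=Q(\{y,x\})$. With these definitions $r(x,y)=\alpha(x)Q(\{x,y\})$ holds by construction, and since $\alpha^{-1}=\mu$ is already normalized one has $Z=\sum_{y}\alpha^{-1}(y)=\sum_y\mu(y)=1$, so the formula $\mu(x)=1/(\alpha(x)Z)$ is recovered as well.

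I expect the only delicate point to be the strict positivity of $\mu$, needed to invert it when defining $\alpha$; this is where strong connectivity is essential, since it guarantees irreducibility and hence a unique, everywhere positive stationary measure. A secondary remark worth making explicit is that the decomposition is not unique---one may rescale $\alpha\mapsto\lambda\alpha$, $Q\mapsto Q/\lambda$---but the product $1/(\alpha(x)Z)$ defining $\mu$ is invariant under such a rescaling, so the invariant measure formula is unambiguous. Everything else reduces to routine algebra.
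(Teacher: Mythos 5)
Your proof is correct and follows essentially the same route as the paper: the converse direction uses the identical construction $\alpha(x)=\mu^{-1}(x)$, $Q(\{x,y\}):=\mu(x)r(x,y)$ (well defined on unordered edges precisely by detailed balance), and the direct implication amounts to checking \eqref{db1} for $\mu(x)=1/(\alpha(x)Z)$. The extra observations you add---that connectedness forces $\mu\alpha$ to be constant (hence uniqueness of the reversible measure), that strict positivity of $\mu$ justifies inverting it, and that the decomposition $(\alpha,Q)$ is only unique up to rescaling---are correct refinements the paper leaves implicit, but they do not change the argument.
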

\begin{proof}
Consider first a random walk with rates \eqref{rr1} and consider the
probability measure $\mu(x)=1/(\alpha(x)Z)$. Then the detailed balance condition \eqref{db1} holds and the random walk is then reversible and the invariant measure is $\mu$.
Conversely consider a random walk for which \eqref{db1} holds. Define then $Q(\{x,y\}):=\mu(x)r(x,y)=\mu(y)r(y,x)$ and
$\alpha(x)=\mu^{-1}(x)$. Then with this choice of the weights formula \eqref{rr1} holds and we have therefore an IRW.
\end{proof}

For the many particle system,
the dynamic conserves the total number of particles and consequently if
there are not sources
there will be a family of invariant measures
depending on the number of particles. On each subset $\Omega_M$ the
dynamics is irreducible and there will be a corresponding unique
invariant measure. This is the canonical invariant measure with $M$ particles $\nu^M$
defined by $\nu^M(\eta)=0$ if $\eta\not\in \Omega_M $ and otherwise
\begin{equation}
\label{staz000}
\nu^M(\eta)
=
\frac{1}{Z_M}
\prod_{x\in V}\frac{(\alpha(x)^{-1})^{\eta(x)}}{\eta(x)!}\,,\qquad \eta\in \Omega_M\,.
\end{equation}
By the multinomial theorem, the normalization constant is
\begin{equation}
\label{staz100}
Z_M
=
\sum_{\eta\in\Omega_M}\prod_{x\in V}\frac{\alpha(x)^{-\eta(x)}}{\eta(x)!}
=
\frac{1}{M!}\Big[\sum_{x\in V}\alpha(x)^{-1}\Big]^M
\;\;.
\end{equation}
It is easy to prove that the canonical measure is \eqref{staz000} by showing
that it satisfies the detailed balance condition
for a system of independent IRW
\begin{equation}\label{dbmulti}
\nu^M(\eta)c_{x,y}(\eta)=\nu^M(\eta^{x,y})c_{y,x}(\eta^{x,y})\,,
\end{equation}
where we recall definition \eqref{lmm} for the rates $c_{x,y}$.
We note that the average number of particles at site $x\in V$ under the
stationary measure $\nu^M$ is
\begin{equation}
\label{staz110}
E_{\nu^M}[\eta(x)]
=M\frac{\alpha(x)^{-1}}{\sum_{y\in V}\alpha(y)^{-1}}\,.
\end{equation}
Indeed we have
\begin{displaymath}
E_{\nu^M}[\eta(x)]
=
\frac{1}{Z_M}
\sum_{\eta\in\Omega_M}
 \eta(x)\prod_{y\in V}\frac{\alpha(y)^{-\eta(y)}}{\eta(y)!}
=
\frac{1}{Z_M}
\sum_{k=1}^M
 \frac{\alpha(x)^{-k}}{(k-1)!}
\sum_{\eta\in\Omega^x_{M-k}}
 \prod_{y\in V\setminus\{x\}}\frac{\alpha(y)^{-\eta(y)}}{\eta(y)!}
\end{displaymath}
where
$\Omega^x_{M-k}$ denotes the set
$\{\eta\in\mathbb{N}^{V\setminus\{x\}},\,
   \sum_{y\in V\setminus\{x\}}\eta(y)=M-k\}$.
Hence, using the expression of the partition function for $M-k$
particles on $V\setminus\{x\}$, one has
\begin{displaymath}
E_{\nu^M}[\eta(x)]
=
\frac{1}{Z_M}
\sum_{k=1}^M
 \frac{\alpha(x)^{-k}}{(k-1)!}
\frac{1}{(M-k)!}
 \Big[\sum_{y\in V\setminus\{x\}}\alpha(y)^{-1}\Big]^{M-k}
\end{displaymath}
and, making the change of variables $h=k-1$, one gets
\begin{displaymath}
E_{\nu^M}[\eta(x)]
=
\frac{1}{Z_M}
 \frac{\alpha(x)^{-1}}{(M-1)!}
\sum_{k=0}^{M-1}
 \binom{M-1}{h}
 (\alpha(x)^{-1})^h
 \Big[\sum_{y\in V\setminus\{x\}}\alpha(y)^{-1}\Big]^{M-k}
\end{displaymath}
yielding \eqref{staz110} after some straightforward algebra.

An alternative way of looking at this is by labeling the particles. Since the particles
are independent, if we distribute initially the particles independently they will be independent at any later time.
In particular considering very long times  the particles will be independent in the stationary state. Calling
$X_i\in V$ the position of the particle with label $i$ in the stationary state we have that
the variables $X_i$ are independent and each of them
has distribution
coinciding with the invariant measure of one single walker described in Lemma \ref{carrevw}. We have therefore
$$
E_{\nu^M}\left[\eta(x)\right]= E\left[\sum_{i=1}^M\delta_{X_i,x}\right]=M P(X_1=x)=M\frac{\alpha^{-1}(x)}{Z}
$$
that is exactly the right hand side of \eqref{staz110}.

\smallskip

It will be more convenient to work with the grand canonical invariant measures that are
obtained as special convex combinations of the canonical ones. The family of
grand canonical invariant measures is parameterized by a parameter related
to the averaged density.
Given a function
$\lambda(\cdot):V \rightarrow \mathbb R$ we define an associated  inhomogeneous product Poisson measure
\begin{equation}\label{psv}
\mu^{\lambda(\cdot)}(\eta)
=
\prod_{x\in V}e^{-\lambda(x)}\frac{\lambda(x)^{\eta(x)}}{\eta(x)!}\,.
\end{equation}
When $\lambda(\cdot)=\lambda$ is a constant function we call simply $\mu^\lambda$ the corresponding
homogeneous product measure.
The measure \eqref{psv} satisfies a detailed balance condition
similar to \eqref{dbmulti} provided $\lambda(x)=c\alpha^{-1}(x)$ for an
arbitrary constant $c$.
We obtain in this way a family of grand canonical invariant measures
depending on the free parameter $c$.
We note that the average number of particles at site $x\in V$ under the
measure $\mu^{\lambda(\cdot)}$ is
$E_{\mu^{\lambda(\cdot)}}[\eta(x)]=\lambda(x)$. We have therefore for the grand canonical
stationary measures $E_{\mu^{c\alpha^{-1}(\cdot)}}(\eta(x))=c\alpha^{-1}(x)$.

The canonical measures are obtained by the grand canonical ones conditioning on the total number of particles. More precisely we have
$$
\nu^M(\eta)=\mu^{c\alpha^{-1}(\cdot)}\left(\eta\Big|\sum_{x\in V}\eta(x)=M\right)\,,
$$
and the conditioning is independent from the parameter $c$ of the grand canonical measure.

\section{Scaling limits}
\label{s:idro}

\subsection{Microscopic and macroscopic observables}
\label{s:micromacro}
\par\noindent
In order to perform the scaling limits we need to introduce a general framework and some observables. We will give a microscopic and a macroscopic description of the system.
The macroscopic domain $\Lambda$ is in general a bounded domain
of $\mathbb R^d$, but to avoid dealing
with boundary conditions we consider the $d$
dimensional torus $[0,1]^d$ with periodic boundary conditions.
The discretization of the macroscopic domain
is $\Lambda_N:=(\mathbb Z/N)^d\cap \Lambda$ that will be the set of vertices denoted before as $V$, with edges between nearest neighbors sites. We call respectively $E_N$ and $F_N$  the oriented and the
un--oriented edges of the graph. We denote by $\mathcal L_N$ the generator of the process \eqref{pisiq} when the underlying graph is $(\Lambda_N,E_N)$. In general, a lower index $N$ is used to denote the fact that the graph that we are considering is the lattice $\Lambda_N$ with the corresponding edges.

A discrete vector field $\phi$ is a map $\phi:E_N\rightarrow \mathbb R$
such that $\phi(x,y)=-\phi(y,x)$. The divergence of $\phi$
is defined by
\begin{equation}\label{dd}
\nabla\cdot \phi(x)
:=\sum_{y\in C(x)}\phi(x,y)\,.
\end{equation}
A vector field $\phi$ is of gradient type if there exists a function $f:V\rightarrow \mathbb R$ such that $\phi(x,y)=f(y)-f(x)$. In this case we write $\phi=\nabla f$.

We use the same notation for the discrete and continuous gradient and divergence since they are one a discretized version of the other. To understand if the symbol means the discrete or the continuous operator we have to observe on which
object it is acting.

Given a smooth function $f:\Lambda \to \mathbb R$,
its discretized version $f_N$ on the lattice $\Lambda_N$
is defined by $f_N(x)=f(x)$, $x\in \Lambda_N$ (with abuse of notation we drop sometimes the index $N$).
Given a smooth vector field
$\psi:\Lambda\to\mathbb R^d$ a natural discretization
is obtained for example considering the line integral
\begin{equation}
\label{dvf}
\psi_N(x,y):=\int_{(x,y)}\psi(z)\cdot dl\,, \quad (x,y)\in E_N\,.
\end{equation}
We have that $\psi_N$ is a discrete vector field.

We will use repeatedly the following integration by parts formula that can be easily checked. Consider a function $f:\Lambda_N\to \mathbb R$ and a discrete vector field $\phi_N$ we have
\begin{equation}\label{byp}
\sum_{x\in \Lambda_N}f(x)\nabla\cdot \phi_N(x)=\frac{1}{2}\sum_{(x,y)\in E_N}\left(f(x)-f(y)\right)\phi_N(x,y)\,.
\end{equation}
We have also the following relationship between sums over ordered edges and unordered ones. Given two discrete
vector fields $\phi_N,\psi_N$ we have
\begin{equation}\label{EtF}
\frac{1}{2}\sum_{(x,y)\in E_N}\phi_N(x,y)\psi_N(x,y)
=
\sum_{\{x,y\}\in F_N}\phi_N(x,y)\psi_N(x,y)\,.
\end{equation}
Note that the right hand side in \eqref{EtF} is not ambiguously written since the term to be summed is symmetric in the exchange of $x$ with $y$.

Consider a collection of smooth weight functions
$Q=(Q_1,\dots ,Q_d):\Lambda\to \left(\mathbb R_+\right)^d$.
We consider a corresponding discretized version as a weight function
$Q_N$ taking values on $\mathbb R_+$ and defined on the un-oriented edges by
\begin{equation}\label{pesiq}
Q_N(\{x,y\}):=Q_i\left(\frac{x+y}{2}\right)\,, \qquad \{x,y\}\in  F_N\,,
\end{equation}
where $i$ in \eqref{pesiq} has to be fixed in such a way that $y=x\pm e^{i}$
where $e^{i}$ is the vector of modulus $N^{-1}$ and directed as the $i$ coordinate axis. Note that this discretization is very different  with respect
to \eqref{dvf} since in that case $\psi_N(x,y)$ is of order
$1/N$ while in this case $Q_N(\{x,y\})$ is of order one.

The general situation that we imagine is that the weights on the edges are the discretization $Q_N$ of positive smooth weight functions while the weights on the vertices are the discretization $\alpha_N$ of a positive smooth function.

There is a natural mathematical object to be introduced in order to describe the scaling limit of the models. This is
the empirical measure $\pi_N(\eta)$ that is a positive measure on
$\Lambda$, with finite total mass, i.e. an element of $\mathcal M^+(\Lambda)$, associated to a configuration of particles $\eta$ and defined by
\begin{equation}\label{empm}
\pi_N(\eta):=\frac{1}{N^d}\sum_{x\in \Lambda_N}\eta(x)\delta_x
\end{equation}
where $\delta_x$ is the delta measure. According to this definition, given a continuous function $f:\Lambda\to \mathbb R$ we have
$$
\int_\Lambda f\,d\pi_N(\eta)=\frac{1}{N^d}\sum_{x\in \Lambda_N}\eta(x)f(x).
$$
We endow $\mathcal M^+(\Lambda)$ with the weak topology.
We say that a sequence of configurations $\eta$ (for each $N$ we have a configuration of particles on $\Lambda_N$, for simplicity of notation the dependence on $N$ is understood) is associated to a density profile $\rho\in L^1(\Lambda)$ if
$\pi_N(\eta)\to \rho(x)dx$ where $\to$ denotes the weak convergence on $\mathcal M^+(\Lambda)$.
This means that for any continuous function $f$ (recall that $\Lambda$ is compact) we have
$$
\lim_{N\to+\infty}\int_\Lambda f\,d\pi_N(\eta)=\int_\Lambda f(x)\rho(x)dx\,.
$$

Likewise a sequence of probability measures
$\mu_N$ on the configurations of particles $\mathbb N^{\Lambda_N}$ is said to be associated with a density profile $\rho$ if for
any continuous function $f$ and for any $\epsilon >0$ we have
\begin{equation}\label{associato}
\lim_{N\to+\infty} P_{\mu_N}\left(\left|\int_\Lambda f\,d\pi_N(\eta)-\int_\Lambda f(x)\rho(x)dx\right|> \epsilon\right)=0\,.
\end{equation}

\subsection{Large deviations and free energy}
\label{ldfe}
\par\noindent
We discuss firstly the scaling limit for the empirical measure when the particles are distributed according to a grand canonical invariant measure.

We perform the computation for a generic continuous function $\lambda(\cdot)$ recalling that the grand canonical invariant measure is obtained
setting $\lambda(\cdot)=c\alpha^{-1}(\cdot)$ for a suitable $c$.
Since the measure is of product type we can discuss this problem following classic strategies and obtaining not only the scaling limit
but also the corresponding large deviations asymptotic \cite{KL1999,asLD}.  In this case it is
indeed possible to compute exactly the scaled cumulant generating function.
Let $f$ be a continuous function; we can compute
\begin{equation}
\label{cgf}
V^*(f):=\lim_{N\to +\infty}
\frac{1}{N^d}\log E_{\nu_N^{\lambda(\cdot)}}\left[e^{N^d\int_\Lambda f d\pi_N(\eta)}\right]\,.
\end{equation}
Since the invariant measure is product, \eqref{cgf} can be developed as
\begin{eqnarray}
V^*(f)
&=&\lim_{N\to +\infty}\frac{1}{N^d}
\sum_{x\in \Lambda_N}
\log \left[e^{-\lambda(x)}\sum_{k=0}^\infty\frac{\lambda(x)^ke^{f(x)k}}{k!}\right]\nonumber \\
&=& \lim_{N\to \infty}\frac{1}{N^d}\sum_{x\in\Lambda_N}\lambda(x)(e^{f(x)}-1) \nonumber \\
&=& \int_\Lambda \lambda(x)(e^{f(x))}-1) \,dx\,.
\end{eqnarray}
The last equality follows by the fact that we have in the previous step the corresponding Riemann sums.

According to general results on large deviations \cite{asLD} the corresponding
large deviations rate functional, on $\mathcal M^+(\Lambda)$ endowed with the weak convergence, is given by
\begin{equation}\label{laplat}
V(\rho)=\sup_{f\in C(\Lambda)} \left[\int_\Lambda f\,d\rho-V^*(f)\right]\,.
\end{equation}
This gives a rate functional $V$ that is $+\infty$ if
the positive measure $\rho$ is not absolutely continuous and when $\rho=\rho(x)\,dx$ we have
\begin{equation}
\label{Vro}
V(\rho)=\int_\Lambda \left[f(\rho(x))-f(\lambda(x))-f'(\lambda(x))
\left(\rho(x)-\lambda(x)\right)\right]\, dx
\end{equation}
where $f(\rho)=\rho\log\rho$ is the density of free energy for a system of independent particles. Here and hereafter with call with the same name
an absolutely continuous measure and the corresponding density.

The form of the rate functional \eqref{Vro} has a  structure similar to the one corresponding to a spatially homogeneous system. The only difference is that in \eqref{Vro} $\lambda(x)$ has to be substituted by a constant corresponding to the typical density.  Recall instead that $\lambda(x)=c\alpha^{-1}(x)$ for the inhomogeneous grand canonical measure.

The functional \eqref{Vro} plays the role of a thermodynamic potential and its probabilistic interpretation is that
roughly we have
\begin{equation}\label{ldvg}
P_{\mu_N^{\lambda(\cdot)}}\left(\pi_N(\eta)\sim \rho(x)dx\right)\simeq e^{-N^dV(\rho)}\,,
\end{equation}
where $\sim$ means closeness in the weak topology and $\simeq$ means asymptotic logarithmic equivalence (see \cite{asLD} for a precise statement).
In particular, since $V(\rho)=0$ if and only if $\rho(x)=\lambda(x)$, from \eqref{ldvg} we can deduce the scaling limit of the empirical measure
when the particles are distributed according to the invariant measure. We have indeed that $\pi_N(\eta)\to \bar\rho(x)dx=c\alpha^{-1}(x)dx$, weakly $\mu_N^{c\alpha^{-1}(\cdot)} a.e.$.

\subsection{Dynamic scaling limit}
\label{s:idrodinamico}
\par\noindent
We deduce in this section the diffusive scaling limit of many independent IRW's on the lattice $\Lambda_N$. This means that we consider a system of particles defined by the rates \eqref{lmm}. This system has a diffusive behavior and this means that we have to multiply by $N^2$ the rates of jump that corresponds to accelerate by the same scale factor the time.

Recall that we consider the situation where the weights on the lattice are inherited by discretization of $C^2$
inhomogeneities. In particular we fix some $C^2$ and strictly positive weights $Q=(Q_1,\dots ,Q_d)$ and a $C^2$ and strictly positive function
$\alpha$. The parameters of the models are fixed discretizing these functions as discussed before.

The proof of our result follows the general strategy outlined in \cite{KL1999} for gradient reversible models with the simplifying
feature that we have independent particles.  We give an outline of the proof underlying the modifications that we have to do in order
to keep into account the spatial inhomogeneity of the models.

Given $\nu_N$ and $\mu_N$ two sequences of probability measures
on the configuration of particles $\Omega$ and such that $\nu_N$ is absolutely continuous with respect to $\mu_N$ we introduce their relative entropy defined by
\begin{equation}\label{re}
H\left(\nu_N|\mu_N\right):= E_{\nu_N}\left[\log\frac{\nu_N(\eta)}{\mu_N(\eta)}\right]\,.
\end{equation}

A key mathematical object to understand the hydrodynamic behavior
of the system is the instantaneous current. This is a discrete vector field depending on configurations of particles
and representing the rate at which particles cross the bonds. If $c_{x,y}(\eta)$ is the rate at which one
particle jumps from $x$ to $y$ in the configuration $\eta$ we have that the corresponding
instantaneous current is given by
\begin{equation}\label{isc}
j_\eta(x,y):=c_{x,y}(\eta)-c_{y,x}(\eta)\,.
\end{equation}
For each fixed configuration $\eta$ this is a discrete vector field. The intuitive interpretation of the instantaneous current is the rate at which particles cross the bond $(x,y)$. Let $\mathcal N_{x,y}(t)$ be the number of particles that jumped from site $x$ to site $y$ up to time $t$ in the stochastic evolution. The current flown across the bond $(x,y)$ up to time $t$ is defined as
\begin{equation}\label{currver}
J_t(x,y):=\mathcal N_{x,y}(t)-\mathcal N_{y,x}(t)\,.
\end{equation}
This is again a discrete vector field. It is important to point out however that \eqref{currver} depends  on the whole trajectory on the time window $[0,t]$ of the system of particles while instead the instantaneous current \eqref{isc} depends just on a configuration of particles $\eta$. The importance of the instantaneous current
is based on the key observation (see for example \cite{Spohn} Section II 2.3) that
\begin{equation}\label{mart}
J_t(x,y)-\int_0^tj_{\eta(s)}(x,y)ds
\end{equation}
is a martingale.
Recalling \eqref{lmm} we have that the instantaneous current is given by
\begin{equation}\label{icr}
j_\eta(x,y)=Q(\{x,y\})\left[\alpha(x)\eta(x)-\alpha(y)\eta(y)\right]\,.
\end{equation}
Recall also that to get a non--trivial scaling limit we will accelerate the process by a factor of $N^2$ so that the instantaneous
current \eqref{icr} will be multiplied by $N^2$.

Our result is the following.
\begin{theorem}\label{ilth}
Consider a collection of IRW's associated to the discretization of $C^2$ smooth and strictly positive weights $\alpha$ and $Q$. Consider $\rho_0$ an element of $L^1(\Lambda, dx)$. Let $\nu_N$ be a sequence of probability measures on the configuration of particles $\Omega$ associated to the profile $\rho_0$ in the sense of \eqref{associato} and such that there exists a positive constant $K$ and a constant $\lambda$
such that
\begin{equation}
H\left(\nu_N|\mu^\lambda_N\right)\leq KN^d\,.
%&\limsup_{N\to +\infty}\mathbb E_{\nu^N}\left[N^{-d}\sum_{x\in \Lambda_N}\eta^2(x)\right]\leq K_1\,.
\end{equation}
When the rates in \eqref{pisiq} are multiplied by $N^2$ we have that for any $t$, for any continuous function $f$ and for any $\epsilon >0$
\begin{equation}\label{ordinario}
\lim_{N\to+\infty}\mathbb P_{\nu_N}\left(\left|\int_\Lambda f\,d\pi_N(\eta_t)-\int_\Lambda f(x)\rho(x,t)dx\right|> \epsilon\right)=0\,,
\end{equation}
where $\rho(x,t)$ is the unique weak solution of the equation
\begin{equation}\label{idro}
\left\{
\begin{array}{ll}
\partial_t\rho=\nabla\cdot\Big(\mathbb{Q}\nabla\Big(\alpha\rho\Big)\Big) \\
\rho(x,0)=\rho_0(x)
\end{array}
\right.
\end{equation}
and $\mathbb Q$ is the diagonal matrix having elements $\mathbb Q_{i,j}(x):=Q_i(x)\delta_{i,j}$.
\end{theorem}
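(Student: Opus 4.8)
The plan is to run the martingale (Dynkin) method of \cite{KL1999}, exploiting the crucial simplification that for independent particles the evolution equation for the empirical measure closes exactly, so that no Boltzmann--Gibbs replacement is needed. Fix a $C^2$ test function $f$ and set $X^f_t:=\int_\Lambda f\,d\pi_N(\eta_t)=N^{-d}\sum_{x}\eta_t(x)f(x)$. Since the compensated current \eqref{mart} is a martingale, Dynkin's formula gives that
\[
M^f_t:=X^f_t-X^f_0-\int_0^t N^2\,\mathcal L_N X^f(\eta_s)\,ds
\]
is a mean--zero martingale, and the whole argument reduces to identifying the drift, showing the martingale is negligible, and passing to the limit.

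First I would compute the drift. A jump along $(x,y)$ changes $X^f$ by $N^{-d}[f(y)-f(x)]$, so $\mathcal L_N X^f=N^{-d}\sum_{(x,y)\in E_N}c_{x,y}(\eta)[f(y)-f(x)]$; symmetrising in $x\leftrightarrow y$ and using the instantaneous current \eqref{icr} together with \eqref{byp} gives the discrete continuity equation $\mathcal L_N X^f=-N^{-d}\sum_x f(x)\,\nabla\cdot j_\eta(x)$. A second discrete summation by parts, moving the two differences onto $f$ and relabelling the neighbour sum, turns this into
\[
N^2\mathcal L_N X^f(\eta)=\frac{1}{N^d}\sum_{x\in\Lambda_N}\alpha(x)\,\eta(x)\,\big(\Delta^N_{\mathbb Q}f\big)(x),
\qquad
\big(\Delta^N_{\mathbb Q}f\big)(x):=N^2\!\!\sum_{y\in C(x)}\!Q(\{x,y\})\,[f(y)-f(x)].
\]
The key point is that the right-hand side is a \emph{linear} functional of $\pi_N(\eta)$: the equation is already closed. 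Using \eqref{pesiq} and the $C^2$ regularity of $Q$ and $f$, the variable-coefficient second difference $\Delta^N_{\mathbb Q}f$ converges uniformly to $\nabla\cdot(\mathbb Q\nabla f)$ with an $O(1/N)$ error, whence
\[
N^2\mathcal L_N X^f(\eta)=\int_\Lambda \alpha\,\nabla\cdot(\mathbb Q\nabla f)\,d\pi_N(\eta)+R_N,
\qquad |R_N|\le \frac{C}{N}\int_\Lambda 1\,d\pi_N(\eta),
\]
and the remainder is controlled once the total mass $\int 1\,d\pi_N$ is bounded, which follows from particle conservation together with the hypothesis $H(\nu_N|\mu^\lambda_N)\le KN^d$ via the entropy inequality.

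Next I would dispose of the martingale. Its predictable quadratic variation equals $\int_0^t N^2\sum_{(x,y)}c_{x,y}(\eta_s)\big(N^{-d}[f(y)-f(x)]\big)^2\,ds$; since $[f(y)-f(x)]^2=O(N^{-2})$ and $\sum_x\eta(x)=O(N^d)$, this is $O(N^{-d})$, so by Doob's inequality $\sup_{t\le T}|M^f_t|\to0$ in $L^2$. Combining this with the drift computation, and using the entropy bound to produce the a priori moment estimates needed for tightness (Aldous' criterion) and for the absolute continuity of limit points, I would conclude that $\pi_N(\eta_\cdot)$ is tight in $D([0,T],\mathcal M^+(\Lambda))$ and that every limit point is carried by trajectories $\rho(x,t)\,dx$ satisfying, for all $C^2$ functions $f$,
\[
\int_\Lambda f\,\rho(\cdot,t)\,dx-\int_\Lambda f\,\rho_0\,dx=\int_0^t\!\!\int_\Lambda \alpha\,\nabla\cdot(\mathbb Q\nabla f)\,\rho(\cdot,s)\,dx\,ds .
\]
Integrating by parts twice on the torus (using that $\mathbb Q$ is symmetric) recognises this as the weak formulation of \eqref{idro}, the initial datum being $\rho_0$ because $\nu_N$ is associated with $\rho_0$.

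Finally, uniqueness of the weak solution of the linear parabolic problem \eqref{idro}, with $C^2$ strictly positive coefficients on the torus, forces all subsequential limits to coincide, upgrading tightness to convergence of the full sequence and yielding \eqref{ordinario}. I expect the main obstacle to be \emph{not} the closure of the equation, which is automatic for independent particles, but rather the two analytic inputs the martingale method leaves open: establishing uniqueness of weak solutions of \eqref{idro} in the appropriate function space with merely $C^2$ coefficients, and extracting from the entropy bound $H(\nu_N|\mu^\lambda_N)\le KN^d$ the uniform moment and energy estimates needed both to control $R_N$ and to guarantee that the limit points are absolutely continuous.
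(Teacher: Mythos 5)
Your strategy coincides with the paper's own: the Dynkin/martingale method of \cite{KL1999}, exploiting that independence makes the drift a linear functional of the empirical measure (no replacement lemma), followed by Aldous tightness, Portmanteau identification of limit points, and uniqueness. Your drift computation, quadratic variation bound and tightness outline match the paper's argument. However, the two items you explicitly defer in your last paragraph are genuine gaps, not routine citations, and the paper must do real work for both. For uniqueness, the difficulty is that the limit points are only known to be measure-valued paths with $L^1(\Lambda)$ densities, so there is no energy class in which to run a standard Gronwall or $H^1$ energy argument. The paper resolves this by duality: if $\pi_1,\pi_2$ both satisfy the weak formulation \eqref{weak}, their difference $\pi$ satisfies
\begin{equation*}
\int_0^t ds\int_\Lambda\bigl\{\partial_s f+\alpha\,\nabla\cdot(\mathbb Q\nabla f)\bigr\}\,d\pi(s)=0\,,
\end{equation*}
and one chooses $f$ to solve the \emph{backward} dual problem $\partial_s f+\alpha\nabla\cdot(\mathbb Q\nabla f)=g$ with $f(t,\cdot)=0$, for arbitrary smooth $\Lambda$--periodic $g$; existence of such $f$ comes from classical parabolic theory \cite{LSU}, and then $\int_0^t\!\int_\Lambda g\,d\pi=0$ for all $g$ forces $\pi=0$. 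Note also that the weak formulation already carries the adjoint operator $\alpha\nabla\cdot(\mathbb Q\nabla f)$ on the test function, so your step of ``integrating by parts twice'' is neither needed nor available at this regularity: the test-function formulation is the definition of weak solution.

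For absolute continuity of limit points, moment (mass) bounds do not suffice: a tight sequence of empirical measures can converge to a limit with atoms or a singular part. The paper uses the entropy hypothesis in an essential and specific way: it first transfers the bound $H(\nu_N|\mu^\lambda_N)\le KN^d$ to the \emph{inhomogeneous invariant} product measure $\mu_N^{c\alpha^{-1}(\cdot)}$ by an explicit product-Poisson computation, then uses that relative entropy with respect to an invariant measure is non-increasing along the dynamics, so that $H(\nu_N(t)|\mu_N^{c\alpha^{-1}(\cdot)})\le CN^d$ uniformly in $t$, and finally invokes the argument of \cite{KL1999} (their Lemma 1.6) with the large-deviations rate functional $V$ of \eqref{Vro} playing the role of $I_0$, which is what forces concentration on absolutely continuous paths. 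A last technical point your sketch skips: \eqref{ordinario} concerns a fixed-time marginal, but the evaluation map $\pi\mapsto\int_\Lambda f\,d\pi(t)$ is not continuous on the Skorokhod space $D([0,t],\mathcal M^+(\Lambda))$; the paper closes this by showing that the limit law is concentrated on weakly continuous paths and applying Portmanteau once more, after which weak convergence to a constant gives convergence in probability.
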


\begin{proof}
The proof is organized into different steps.

\subsection{Preliminaries}
\label{s:mart}
\par\noindent
First of all we recall some basic facts about martingales and Markov processes (see for example \cite{KL1999} Appendix 1 Section 5).
Consider a function $g(s,\eta)$ that for each configuration $\eta$ is $C^2$ in the time variable $s$. We have that
\begin{equation}\label{primamart}
M_t:=g(t,\eta_t)-g(0,\eta_0)-\int_0^t\left(\partial_s+N^2\mathcal L_N\right) g(s,\eta_s)\,ds
\end{equation}
is a martingale.
Moreover we have that
\begin{equation}\label{secondmart}
B_t:=M_t^2-N^2\int_0^t\left[\mathcal L_N g^2(s,\eta_s)-2g(s,\eta_s)\mathcal L_N g(s,\eta_s)\right]ds
\end{equation}
is a martingale too. The $N^2$ factor is due to the rescaling of the time of the process. Since $B_0=M_0=0$ we have
mean zero martingales.

As an example consider
the discrete continuity equation for the process that is
$$
\eta_t(x)-\eta_0(x)+\nabla\cdot J_t(x)=0\,.
$$
This is true for any trajectory of the process.
Using \eqref{mart} we obtain that
\begin{equation}\label{dcei}
\eta_t(x)-\eta_0(x)+N^2\int_0^t\nabla\cdot j_{\eta(s)}(x)ds
\end{equation}
is a martingale.
A direct computation shows that
\begin{equation}\label{rih}
\mathcal L_N\eta(x)=-\nabla\cdot j_\eta(x)\,,
\end{equation}
so that \eqref{dcei} is a martingale of the form \eqref{primamart}
with $g(\eta)=\eta(x)$. We recall that in \eqref{rih} the lower index $N$ on the generator
simply stress the fact that the underlying graph is the lattice $\Lambda_N$.

Consider a smooth test function $f(s,x):\mathbb R^+\times \Lambda\to \mathbb R$ and the associated martingale
\begin{align}
M^f(t):=&\int_\Lambda f(t)\,d\pi_N(\eta_t)-\int_\Lambda f(0)\,d\pi_N(\eta_0) \nonumber\\
&-N^{-d}\sum_{x\in \Lambda_N}\int_0^tds \left(\partial_s f(s,x)\eta_s(x)+f(s,x)N^2\mathcal L_N\eta_s(x)\right)\,.\label{fff}
\end{align}
The martingale \eqref{fff} is a martingale of the form \eqref{primamart} corresponding to the function
$$
g(s,\eta)=\int_\Lambda f(s)\,d\pi_N(\eta)\,.
$$
%that coincides with
%while on the right hand side we have the martingale
%$$
%M^f(t):=N^{-d}\sum_{x\in \Lambda_N}f(x)\nabla\cdot M_t(x)\,.
%$$
The corresponding martingale of the form \eqref{secondmart} is given by
\begin{equation}\label{cena}
B^f(t):=\left(M^f(t)\right)^2-\int_0^t\Gamma^f(s)ds
\end{equation}
where
$$
\Gamma^f(t):=N^2\mathcal L_N\left(\int_\Lambda f(t)\,d\pi_N(\eta_t)\right)^2-2N^2\left(\int_\Lambda f(t)\,d\pi_N(\eta_t)\right)\mathcal L_N \left(\int_\Lambda f(t)\,d\pi_N(\eta_t)\right)\,.
$$
The second term (without he minus sign) on the right hand side of \eqref{cena} is called the quadratic variation of the martingale $M^f$.
A direct computation gives
\begin{equation}\label{ff}
\Gamma^f(t)= \frac{N^2}{2N^{2d}}\sum_{\{x,y\}\in F_N}Q(\{x,y\})\big(f(t,x)-f(t,y)\big)^2\left(\alpha(x)\eta_t(x)+\alpha(y)\eta_t(y)\right)\,.
\end{equation}
This is obtained by the following elementary facts and simple algebraic manipulations.
If $\{x,y\}\not \in F_N$ then
$$
\mathcal L_N\left[\eta(x)\eta(y)\right]=-\eta(x)\nabla\cdot j_\eta(y)-\eta(y)\nabla\cdot j_\eta(x)\,.
$$
We have also
$$
\mathcal L_N \left[\eta^2(x)\right]=-2\eta(x)\nabla\cdot j_\eta(x)+\sum_{y\in C(x)}\left(c_{x,y}(\eta)+c_{y,x}(\eta)\right)\,.
$$
Finally  when $\{x,y\}\in F_N$ we have
$$
\mathcal L_N\left[\eta(x)\eta(y)\right]=-\eta(x)\nabla\cdot j_\eta(y)-\eta(y)\nabla\cdot j_\eta(x)-\left(c_{x,y}(\eta)+c_{y,x}(\eta)\right)\,.
$$
Since $f,\alpha,Q$ are $C^2$, using \eqref{ff}, we have that
\begin{equation}\label{gamma}
\Gamma^f(t)\leq \frac{C}{N^{2d}}\sum_{x\in \Lambda_N}\eta_t(x)
\end{equation}
for a suitable constant $C$. This is a key estimate in our computations that is similar to the estimate that holds in the homogeneous case. This fact allows to extend the results in the homogeneous case to the non--homogeneous one.

\smallskip

With a discrete integration by parts \eqref{byp} the third  term on the right hand side of \eqref{fff} (without the minus sign) becomes
\begin{equation}\label{aida}
\int_0^tds\int_\Lambda \partial_sf(s)\,d\pi_N(\eta_s)+\frac{N^2}{2N^d}\sum_{(x,y)\in E_N}\int_0^t\left(f(s,y)-f(s,x)\right)j_{\eta_s}(x,y)\,ds\,.
\end{equation}
Using the expression \eqref{icr} of the rates and performing another discrete integration by parts, the second term in \eqref{aida} becomes
\begin{equation}\label{ms}
\frac{1}{N^d}\sum_{x\in \Lambda_N}\int_0^t\alpha(x)\eta_s(x)\left[N^2\sum_{y\in C(x)}Q(\{x,y\})\left(f(s,y)-f(s,x)\right)\right]\,.
\end{equation}
Inside squared parenthesis in the above formula we have a discrete operator acting on the test function $f$ and not depending on configurations of particles. We need to understand which is the corresponding continuous differential operator.
Since our rates are obtained by discretizing smooth functions we obtain with a Taylor expansion of $Q$ that the term
inside the squared parenthesis in \eqref{ms} can be written, up to a term $O(1/N)$, as
\begin{equation}
\label{sq}
\begin{array}{l}
{\displaystyle
\sum_{i=1}^d
\Big[Q_i(x)N^2\left(-2f(s,x)+f\left(s,x+e^i\right)+f(s,x-e^i)\right)
}
\\
\phantom{mmmmmmmmmmm}
{\displaystyle
+\frac{N}{2}
\partial_{x_i}Q_i(x)(f(s,x+e^i)-f(s,x-e^i))\Big]\,.
}
\end{array}
\end{equation}
Recall that $e^i$ is the vector associated to the $i$ Cartesian axis and having modulus $1/N$.
The expression inside the squared parenthesis in \eqref{sq} is then equal to
\begin{displaymath}
\nabla \cdot(\mathbb Q(x)\nabla f(s,x))
\end{displaymath}
up to a infinitesimal term uniform over $x$, where
the divergence ad gradient operators are the continuous ones.
We obtain, therefore, that
\begin{equation}\label{tppf}
N^{2-d}\sum_{x\in \Lambda_N}\int_0^tds\, f(s,x)\mathcal L_N \eta_s(x)=\int_0^tds\int_\Lambda \alpha\nabla\cdot\left(\mathbb Q\nabla f(s)\right)\,d\pi_N(\eta_s)+\mathcal R_N(t)\,,
\end{equation}
where the residual term $\mathcal R_N(t)$  can be bounded by
$$
|\mathcal R_N(t)|\leq \frac{Ct\int_{\Lambda}d\pi_N(\eta_0)}{N}
$$
for a suitable constant $C$. We used the fact that the dynamics is conservative and we have $\int_\Lambda d\pi_N(\eta_s)=\int_\Lambda d\pi_N(\eta_0)$ for any $s$.

Since the initial configuration is associated to an integrable profile $\rho_0$, selecting as a test function in the definition \eqref{associato} (with $\rho$ replaced by $\rho_0$ and $\mu_N$ by $\nu_N$) a function constantly equal to $1$, we deduce
\begin{equation}\label{sm}
\mathbb P_{\nu_N}\left(\sup_{0\leq s\leq t}\left|\mathcal R_N(s)\right|>\epsilon\right)\leq
P_{\nu_N}\left(\int_\Lambda d\pi_N(\eta)>\frac{\epsilon N}{Ct}\right)\stackrel{N\to+\infty}{\to}0\,, \qquad \forall \epsilon>0\,.
\end{equation}
\smallskip

The general strategy of our proof is the following.
Let us call $\mathcal P_N\in \mathcal M^1\big(D([0,t]; \mathcal M^+(\Lambda))\big)$ the probability measure corresponding to the distribution of $\left(\pi_N(\eta_s)\right)_{s\in[0,t]}\in D([0,t]; \mathcal M^+(\Lambda))$. We write shortly $\mathcal P_N=\mathbb P_{\nu_N}\cdot \pi_N^{-1}$ that means that for any measurable set $A\subseteq D([0,t]; \mathcal M^+(\Lambda))$
we have
$$
\mathcal P_N(A):=\mathbb P_{\nu_N}\left(\left(\pi_N(\eta_s)\right)_{s\in[0,t]}\in A\right)\,.
$$
We will first prove that the sequence of probability measures $\mathcal P_N$ is relatively compact. By Prohorov Theorem this
is equivalent to prove that $\mathcal P_N$ is tight.  Then we will prove that any possible limiting measure $\mathcal P^*$ of any possible converging subsequence extracted from $\mathcal P_N$
is concentrated on elements of $D([0,t]; \mathcal M^+(\Lambda))$ that are absolutely continuous for each $s\in[0,t]$ and that satisfy a suitable weak formulation of the equation \eqref{idro}. As a final step we prove uniqueness of the weak solution to \eqref{idro}. This implies that the whole
sequence $\mathcal P_N$ converges weakly to $\mathcal P^*=\delta_{\rho^{id}}$, where we call $\rho^{id}$ the unique weak solution to \eqref{idro}.
The convergence \eqref{associato} follows by the weak convergence of $\mathcal P_N$ and the fact that $\rho^{id}$ is an element of
$D([0,t]; \mathcal M^+(\Lambda))$ that is weakly continuous in the time variable.

\subsection{Tightness}
\label{s:tight}
\par\noindent
The first step consists in proving that the sequence of probability measures $\mathcal P_N$ is relatively compact. As it is discussed in \cite{KL1999} chapters 4 and 5, we need to prove relative compactness of the marginals for any fixed time and in addition we need to have a control concerning oscillations in time.

\smallskip

Since the total mass is preserved by the dynamics to prove the relative compactness of any marginal it is enough to prove it for the initial condition. Since $\Lambda$ is compact we need just to control the total mass. In particular we need to prove
\begin{equation}\label{AAh}
\lim_{A\to +\infty}\limsup_{N\to +\infty}P_{\nu_N}\left(\int_\Lambda d\pi_N(\eta)>A\right)=0\,.
\end{equation}
This is obtained by the same argument used for \eqref{sm}.

\smallskip

To control oscillations we use the Aldous criterion (see \cite{KL1999} chapter 4 Proposition 1.6). By the arguments again in \cite{KL1999} chapter 4 Section 2, we  need to prove that
\begin{equation}\label{aldous}
\lim_{\gamma\to 0 }\limsup_{N\to+\infty}\sup_{\tau}\sup_{\theta\leq \gamma}\mathbb P_{\nu_N}\left(\left|\int_{\Lambda}fd\pi_N(\eta_{\tau+\theta})-\int_{\Lambda}fd\pi_N(\eta_{\tau})\right|>\delta\right)=0\,,
\end{equation}
for any $\delta >0$ and for any $C^2$ test function $f$. In the above formula $\tau$ is varying among all the stopping times bounded by $t$ while $\theta$ is a real number varying in $[0,\gamma]$.
We use \eqref{fff} for a function $f$ that does not depend on time and we obtain that  \eqref{aldous} is true if we have
\begin{equation}\label{strnelli}
\left\{
\begin{array}{l}
{\displaystyle
\lim_{\gamma\to 0 }\limsup_{N\to+\infty}\sup_{\tau}\sup_{\theta\leq \gamma}\mathbb P_{\nu_N}\left(\left|N^{2-d}\sum_{x\in \Lambda_N}\int_{\tau}^{\tau+\theta}f(x)\mathcal L_N\eta_s(x)\,ds\right|>\delta\right)=0\,,
}\\
{\displaystyle
\lim_{\gamma\to 0 }\limsup_{N\to+\infty}\sup_{\tau}\sup_{\theta\leq \gamma}\mathbb P_{\nu_N}\left(\left|M^f(\tau+\theta)-M^f(\tau)\right|>\delta\right)=0\,.}
\end{array}
\right.
\end{equation}
The integrand in the upper condition above can be manipulated up to
the form \eqref{ms} that according to the subsequent computations can be written up to negligible terms as
$$
\frac{1}{N^d}\sum_{x\in \Lambda_N}\alpha(x)\eta_s(x) \nabla \cdot\left(\mathbb Q(x)\nabla f(x)\right)\,.
$$
By the regularity of the functions involved, the integral in the upper condition in \eqref{strnelli} is bounded by
$$
C\int_\tau^{\tau+\theta}ds\int_\Lambda d\pi_N(\eta_s)\leq C\theta \int_\Lambda d\pi_N(\eta_0)\
$$
where the inequality follows by the fact that the dynamics is conservative and $C$ is a suitable constant. Here and hereafter we denote by the same letter $C$ a generic constant that may depend just on the weight and the test functions. The values of the constants in different equations may be different. Since we have \eqref{AAh} and $\theta$ is going to zero we deduce easily the upper condition in \eqref{strnelli},
with an argument like the one for \eqref{sm}.

For the lower condition in \eqref{strnelli} we use Chebysev inequality and get
\begin{equation}\label{uuoo}
 \mathbb P_{\nu_N}\left(\left|M^f(\tau+\theta)-M^f(\tau)\right|>\delta\right) \leq \frac{\mathbb E_{\nu_N}\left(M^f(\tau+\theta)-M^f(\tau)\right)^2}{\delta^2}
\;\;.
\end{equation}
Since $\tau$ is a bounded stopping time then $M^t(\tau+\theta)-M^f(\tau)$ is again a martingale (with time parameter $\theta$)
and having quadratic variation $\int_{\tau}^{\tau+\theta}\Gamma^f(s) ds$ (see \cite{KL1999}). We have therefore that the right hand side
of \eqref{uuoo} is equal to
\begin{equation}
\frac{\mathbb E_{\nu_N}\left(\int_{\tau}^{\tau+\theta}\Gamma^f(s) ds\right)}{\delta^2}\,.
\end{equation}
Using \eqref{gamma} and the conservative property of the dynamics the last term above is bounded by
\begin{equation}
\frac{C\theta}{N^d\delta^2} E_{\nu_N}\left(\int_\Lambda d\pi_N(\eta)\right)\,.
\end{equation}
If we prove that the expected
value in the above formula is bounded then, recalling that $\theta\leq \gamma$, $\gamma\to 0$ and $N\to +\infty$, we proved also the
lower condition in \eqref{strnelli}.
This fact does not follow by the fact that $\nu_N$ is associated to an integrable profile. At this point it is relevant the entropy condition. Recall the basic entropy inequality (see for example \cite{KL1999} appendix 1 Section 8). Given two probability measures $\mu$ and $\nu$ and a function $f$ we have
\begin{equation}\label{entrineq}
E_\nu(f)\leq \beta^{-1}\left[\log  E_\mu \left(e^{\beta f}\right)+H(\nu|\mu)\right]\,,
\end{equation}
where $\beta$ is an arbitrary parameter. We apply this inequality
considering $\nu=\nu_N$, $\mu=\mu_N^\lambda$, $\beta=N^d$ and finally $f(\eta)=\int_\Lambda d\pi_N(\eta)$.
We obtain
\begin{align}
\label{basket}
 E_{\nu_N}\left(\int_\Lambda d\pi_N(\eta)\right)& \leq
\frac{1}{N^d}\left(\log  E_{\mu^\lambda_N}e^{\sum_{x\in \Lambda_N}\eta(x)}+H(\nu_N|\mu^\lambda_N)\right)\nonumber\\
& \leq e^{\lambda(e-1)}+K_0,
\end{align}
where we used the hypothesis on the relative entropy of the initial condition and the explicit form of the generating function of a Poisson distribution. We proved therefore the validity also of the lower condition in \eqref{strnelli} and we proved therefore \eqref{aldous}. The proof of tightness is concluded.

\subsection{Absolute continuity}
\label{s:abco}
\par\noindent
First of all we observe that the bound on the relative entropy for the initial distribution is still valid with respect to a slowly varying product of exponentials $\mu_N^{\lambda(\cdot)}$. This is obtained using again the entropy inequality \eqref{entrineq} with
$\nu=\nu_N$, $\mu=\mu^\lambda_N$, $\beta=1$ and $f=\log\frac{\mu^\lambda_N}{\mu_N^{\lambda(\cdot)}}$. Since we have product measures we can perform explicitly the computations obtaining
$$
\frac{1}{N^d}H(\nu_N|\mu^{\lambda(\cdot)}_N)\leq
\frac{1}{N^d}\sum_{x\in \Lambda_N}\left(\frac{\lambda^2}{\lambda(x)}+\lambda(x)-2\lambda\right)+\frac{2}{N^d}H(\nu_N|\mu^{\lambda}_N)\,.
$$
Since $\lambda(\cdot)$ is continuous and strictly positive the first term on the right hand side is a Riemann sum and converges while the second one is bounded by assumption.

Considering $\lambda(\cdot)=c\alpha^{-1}(\cdot)$ we have that
$\mu^{\lambda(\cdot)}_N$ is invariant for the dynamics and we have therefore (see \cite{KL1999} appendix 1 Section 9) that $H\left(\nu_N(t)|\mu^{\lambda(\cdot)}_N\right)$ is decreasing in time where $\nu_N(t)$ is the distribution of particles at time $t$. This means that for any $t\geq 0$ we have
$H\left(\nu_N(t)|\mu^{\lambda(\cdot)}_N\right)\leq N^d C$ for a suitable constant $C$.
This is the basic fact on which it is based the argument in \cite{KL1999} Section 1. In particular Lemma 1.6 there, should be rewritten considering in this case $I_0$ coinciding with the large deviations rate functional $V$ in \eqref{Vro}.

We deduce that any possible limit point $\mathcal P^*$ of any subsequence in $\mathcal P_N$ is concentrated on elements of
$D([0,t],\mathcal M^+)$ that are of the form $\rho(x,s)dx$ for
any $s\in [0,t]$ and $\rho(x,s)\in L^1(\Lambda)$.

\subsection{Characterization of limit points}
\label{chch}
\par\noindent
Since the sequence of probability measures $\mathcal P_N$ is relatively compact we can extract a converging subsequence. For simplicity of notation we call again $\mathcal P_N$ this converging subsequence and $\mathcal P^*$ its limit point.

Let us consider the martingale \eqref{fff}.
By the Chebysev and the Doob inequality we have
\begin{equation}\label{hulk}
\mathbb P_{\nu_N}\left(\sup_{0\leq s\leq t}|M^f(s)|> \epsilon\right)\leq\frac{4\mathbb E_{\nu_N}\left[\left(M^f(t)\right)^2\right]}{\epsilon^2}
\end{equation}
Since $B^f$ in \eqref{cena} is a martingale and $B^f(0)=0$ we have that
$\mathbb E_{\nu_N}\left[B^f(t)\right]=0$ for any $t$ and consequently
$$
\mathbb E_{\nu_N}\left[\left(M^f(t)\right)^2\right]=\int_0^t\mathbb E_{\nu_N}\left[\Gamma^f(s)\right]\,ds\,.
$$
Recalling the bounds \eqref{gamma} and \eqref{basket}
we have that the right hand side of \eqref{hulk} is bounded by
$
\frac{4Ct}{\epsilon^2 N^d}
$
for a suitable constant $C$ and this is converging to zero when $N\to +\infty$.

Let us call
\begin{equation}
\tilde M^f(t):=\int_\Lambda f(t) d\pi_N(\eta_t)-\int_\Lambda f(0) d\pi_N(\eta_0)-
\int_0^tds\int_\Lambda \left[\partial_s f(s)+ \alpha\nabla\cdot\left(\mathbb Q\nabla f(s)\right)\right]\,d\pi_N(\eta_s)\,.
\end{equation}
First we recall that by \eqref{tppf} we have
$$
M^f(t)-\tilde M^f(t)=\mathcal R_N(t)
$$
that is uniformly negligible in probability according to \eqref{sm}.

Second we observe that the map that associate to any $\pi(s)\in D([0,t], \mathcal M^+(\Lambda))$ the number
$$
\sup_{0\leq w\leq t}\left|\int_\Lambda f(w) d\pi(w)-\int_\Lambda f(0) d\pi(0)-
\int_0^w ds\int_\Lambda \left[\partial_s f(s)+ \alpha\nabla\cdot\left(\mathbb Q\nabla f(s)\right)\right]\,d\pi(s)\right|
$$
is a continuous function in the
Skorokhod topology of $D([0,t], \mathcal M^+(\Lambda))$.

Since by assumption we have that the subsequence $\mathcal P_N$ is weakly converging to $\mathcal P^*$, by Portmanteau Theorem we have for any $\epsilon >0$
\begin{equation}
\label{bns}
\begin{array}{l}
{\displaystyle
\mathcal P^*\Big(\sup_{0\leq w\leq t}\Big|\int_\Lambda f(w) d\pi(w) -\int_\Lambda f(0) d\pi(0)
-
\int_0^w\Big[\partial_s f(s)
%+\int_\Lambda \alpha\nabla\cdot\Big(\mathbb Q\nabla f(s)\Big)\Big]\,d\pi(s)\Big|>\epsilon\Big)
\vphantom{\bigg\{_\}}
}\\
\phantom{mm}
{\displaystyle
+\int_\Lambda \alpha\nabla\cdot\Big(\mathbb Q\nabla f(s)\Big)\Big]\,d\pi(s)\Big|>\epsilon\Big)
 \leq \liminf_{N\to+\infty}\mathbb P_{\nu_N}\Big(\sup_{0\leq w\leq t}\Big| M^f(w)-\mathcal R_N(w)\Big|>\epsilon\Big)\,.}
\end{array}
\end{equation}
By estimates \eqref{sm} and \eqref{hulk} the right hand side in \eqref{bns} is zero and this happens for any $\epsilon >0$.
We obtain therefore that for any limiting measure $\mathcal P^*$ we have
\begin{displaymath}
\begin{array}{l}
{\displaystyle
\mathcal P^*\Big(\pi\,:\, \int_\Lambda f(w) d\pi(w)
-\int_\Lambda f(0) d\pi(0)
}
\\
\phantom{mmmmmm}
{\displaystyle
-\int_0^w\Big[\partial_s f(s)+\int_\Lambda \alpha\nabla\cdot\left(\mathbb Q\nabla f(s)\Big)\Big]\,d\pi(s)=0\,,  0\leq w\leq t\right)=1\,.
}
\end{array}
\end{displaymath}

\subsection{Uniqueness}
\label{s:uni}
\par\noindent
In the above steps we proved that any possible limit point $\mathcal P^*$ of a converging subsequence in $\mathcal P_N$ gives full measure
to elements $\pi\in D([0,t], \mathcal M^+)$ such that: $\pi(0)=\rho_0(x)dx$ (this follows by the assumption on the initial condition), for any $s\in[0,t]$ $\pi(s)\in \mathcal M^+$ is absolutely continuous $\pi(s)=\pi(s,x)dx$  and with total finite mass given by $\int_\Lambda\rho_0(x)dx$ (this follows by the conservative nature of the dynamics and the initial condition), and finally for any test function $f$ that is $C^1$ in time and $C^2$ in space we have
\begin{equation}\label{weak}
\int_\Lambda f(t)d\pi(t)-\int_\Lambda f(0)d\rho_0-\int_0^tds\int_\Lambda \left[\partial_sf(s)+\alpha\nabla\cdot\left(\mathbb Q\nabla f(s)\right)\right]d\pi(s)=0\,.
\end{equation}
Let us now show that there is a
unique $\pi(s,x)dx$ with $\pi(s)\in L^1(\Lambda)$ satisfying \eqref{weak}.
If $\pi_{1}$, $\pi_{2}$ are two solutions,
from \eqref{weak} we readily obtain for $\pi=\pi_{1}-\pi_{2}$
  \begin{equation*}
    \int_{0}^{t}
    ds
    \int_{\Lambda}
    \{
    \partial_s f(s)
    +
    \alpha
    \nabla\cdot
    (\mathbb{Q}
     \nabla
     f(s))
    \}
    d\pi(s)
    =
    0
    \,,
  \end{equation*}
  where $f$ is the solution to the Cauchy problem
  \begin{displaymath}
  \begin{array}{rl}
    \partial_{s}f
    +
    \alpha
    \nabla\cdot(
    \mathbb{Q}\nabla f)=g,
    &
    x\in\mathbb{R}^d
    \,,
    0<s<t,
    \\
    f(t,x)
    =
    0,
    &
    x\in\mathbb{R}^d.
    \end{array}
    \end{displaymath}
Here $g\in C^1([0,t]\times\mathbb{R}^d)$
is $\Lambda$--periodic, as well as all other functions,
and vanishes near $s=t$.
The existence of $f$ in the class above follows from classical
results (\cite{LSU} chapter~4 Section~5).
Then we get in fact
    \begin{equation*}
    \int_{0}^{t}
    ds
    \int_{\Lambda}
    g(s)
    d\pi(s)
    =
    0
    \,,
  \end{equation*}
  for all $g$ as above, yielding therefore $\pi=0$.

We conclude therefore that any possible limiting measure $\mathcal P^*$ needs necessarily to be $\delta_{\rho^{id}}$, Since any possible converging subsequence is converging to the same limiting measure we have that the whole sequence $\mathcal P_N$ is converging to $\delta_{\rho^{id}}$.

\smallskip

Any weak solution of the hydrodynamic equation is an element of $D([0,t], \mathcal M^+(\Lambda))$ that it is indeed weakly continuous in $t$ i.e. it is an element
of $C([0,t],\mathcal M^+(\Lambda))$. Indeed by \eqref{weak} we have for any $C^2$ function $f$
$$
\left|\int_\Lambda fd\pi(s_1)-\int_\Lambda fd\pi(s_1)\right|\leq C|s_1-s_2|\,,
$$
where the constant $C$ depends on the weights, on the function $f$ and on the total mass. The same estimate for any continuous function can be deduced by approximations.
The map that associates to any $\pi\in D([0,t], \mathcal M^+(\Lambda))$ the real number $\int_\Lambda fd\pi(s)$, for a given time $s\in[0,t]$ and a continuous function $f$,
is in general not continuous. We have however that $\mathcal P^*$ is concentrated on weakly continuous paths so that the discontinuity points of this map have $\mathcal P^*$
probability zero and by Portmanteau Theorem we deduce that $\int_\Lambda f d\pi_N(s)$ weakly converges to the constant random variable $\int_\Lambda f(x) \rho(x,s)dx$
where $\rho(x,s)$ is the solution of \eqref{idro}. Since weak convergence to a constant random variable implies convergence in probability we deduce \eqref{ordinario}.
\end{proof}

\section{Heuristics and numerics}
\label{s:numerici}
\par\noindent
In this section we discuss an heuristic argument which explains the
hydrodynamic limits stated in Section~\ref{s:modello}. Moreover,
we shall illustrate numerically the behavior of the SIRW and EIRW stochastic
models for many particles in connection with
the Fokker--Planck and Fick diffusion equations.
In this section, for notation convenience, we shall not use the
set $\Lambda_N$ as above, but we will directly work on
the graph $V=\{0,1,\dots,N\}$.

\subsection{Heuristics for the hydrodynamic limit}
\label{s:heuhyd}
\par\noindent
Consider the SIRW process on $V$ with periodic
boundary conditions for $M$ indistinguishable and independent particles.
We show that in the limit $N\to\infty$
the evolution of the Markov process density profile converges to
that of the Fokker--Planck diffusion problem provided
the \emph{diffusive scaling} is considered.
Let $a<b$ be two reals and set $z_x=a+(b-a)x/N$ so that
$z_x\in[a,b]$. Consider a positive function $D\in C^2([a,b])$ and
set $\alpha(x)=D(z_x)$
for $x\in V$.
Denote by $\eta_x(t)$ the \emph{particle profile} at time $t$,
informally speaking, $\eta_x(t)$ is the average number of particles
occupying the site $x$ at time $t$.
The change of the number
of particles at site
$x$ in a small interval $\Delta t$ can be computed as
\begin{displaymath}
\eta_x(t+\Delta t)-\eta_x(t)
=
-2\alpha(x)n_x(t)\Delta t
+\alpha(x-1)\eta_{x-1}(t)\Delta t
+\alpha(x+1)\eta_{x+1}(t)\Delta t
\;\;.
\end{displaymath}
This equality can be rewritten as
\begin{displaymath}
\frac{\eta_x(t+\Delta t)-\eta_x(t)}{\Delta t/N^2}
=
\frac{[\alpha(x+1)\eta_{x+1}(t)-\alpha(x)\eta_x(t)]
-[\alpha(x)\eta_x(t)-\alpha(x-1)\eta_{x-1}(t)]}
     {1/N^2}
\end{displaymath}
Thus, if time is rescaled as $t/N^2\rightarrow t$ (diffusive scaling),
then in the limit $N\to\infty$
the particle density profile $\eta_x(t)/(1/N)$ will tend to a function
$\rho(z,t)$ solving the equation
\begin{displaymath}
\frac{\partial \rho}{\partial t}
=
\frac{\partial^2 D\rho}{\partial z^2}
\end{displaymath}
which is the Fokker--Planck diffusion equation in $[a,b]$.

\begin{figure}
\begin{picture}(80,180)(-10,0)
\centerline{
\includegraphics[width=0.45\textwidth]{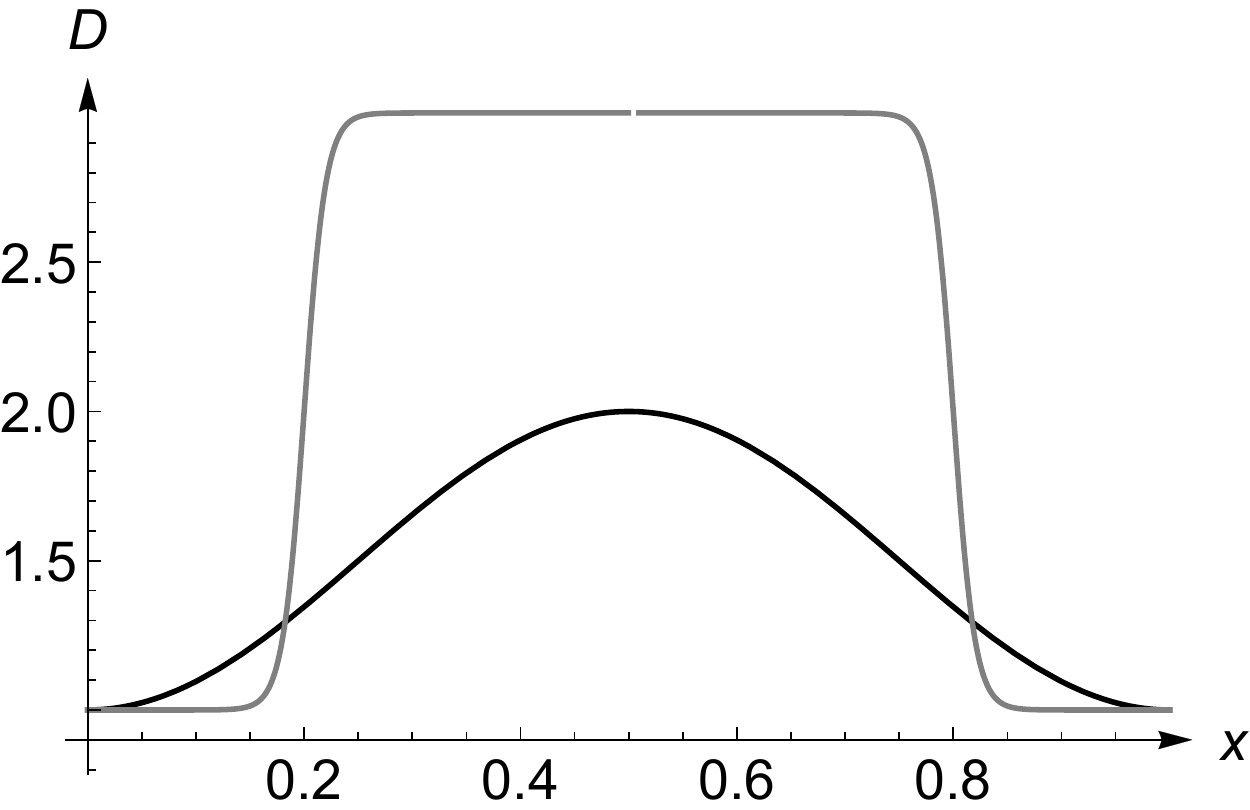}
}
\end{picture}
\caption{Diffusion coefficients \eqref{num000} (black) and
\eqref{num010} (gray).}
\label{fig:accg00}
\end{figure}

We consider the EIRW process on $V=\{0,1,\dots,N\}$ with periodic
conditions
for $M$ indistinguishable and independent particles.
and we use the same notation introduced above in the SIRW
process case.
We let
$Q(\{x,x+1\})=D((z_x+z_{x+1})/2)$
be the rate associated with the edge $\{x,x+1\}$
for $x\in V$, where $\{N,N+1\}$ is identified with
$\{N,0\}$.
The change of the number
of particles at site
$x$ in a small interval $\Delta t$ can be computed as
\begin{displaymath}
\begin{array}{l}
\eta_x(t+\Delta t)-\eta_x(t)
\\
{\displaystyle
\phantom{mm}
=
-(Q(\{x-1,x\})+Q(\{x,x+1\}))\eta_x(t)\Delta t
\vphantom{\bigg\{_\big\}}
}
\\
{\displaystyle
\phantom{mm=}
+(Q(\{x-2,x-1\})+Q(\{x-1,x\}))
 \frac{Q(\{x-1,x\})}{Q(\{x-2,x-1\})+Q(\{x-1,x\})}\eta_{x-1}(t)\Delta t
\vphantom{\bigg\{_\bigg\}}
}
\\
{\displaystyle
\phantom{mm=}
+(Q(\{x,x+1\})+Q(\{x+1,x+2\}))
 \frac{Q(\{x+1,x\})}{Q(\{x,x+1\})+Q(\{x+1,x+2\})}\eta_{x+1}(t)\Delta t
}
\\
\end{array}
\end{displaymath}
and,
hence,
\begin{displaymath}
\begin{array}{l}
\eta_x(t+\Delta t)-\eta_x(t)
\\
\phantom{mm}
=
-(Q(\{x-1,x\})+Q(\{x,x+1\}))\eta_x(t)\Delta t
+Q(\{x-1,x\})\eta_{x-1}(t)\Delta t
\\
\phantom{mm=}
+Q(\{x+1,x\})\eta_{x+1}(t)\Delta t
\;\;.
\end{array}
\end{displaymath}
This equality can be rewritten as
\begin{displaymath}
\frac{\eta_x(t+\Delta t)-\eta_x(t)}{\Delta t/N^2}
=
\frac{Q(\{x,x+1\})[\eta_{x+1}(t)-\eta_x(t)]
-Q(\{x-1,x\})[\eta_x(t)-\eta_{x-1}(t)]}
     {1/N^2}
\;\;.
\end{displaymath}
Thus, if time is rescaled as $t/N^2\rightarrow t$ (diffusive scaling),
then in the limit $N\to\infty$
the particle density profile $\eta_x(t)/(1/N)$ will tend to a function
$\rho(z,t)$ solving the equation
\begin{displaymath}
\frac{\partial \rho}{\partial t}
=
\frac{\partial}{\partial z}
\bigg(
D\frac{\partial \rho}{\partial z}
\bigg)
\end{displaymath}
which is the Fick diffusion equation.

\subsection{Numerical solution of the diffusion equations}
\label{s:num}
\par\noindent
We discuss some numerical results for the periodic
boundary condition Fick and
Fokker--Planck diffusion problem on $[0,1]\times[0,1]$
with the following choices of the diffusion coefficient:
\begin{equation}
\label{num000}
D(z)=-\frac{1}{2}\cos(2\pi z)+\frac{3}{2}
\end{equation}
and
\begin{equation}
\label{num010}
D(z)=
\left\{
\begin{array}{ll}
2+\tanh(50(z-0.2)) & z\le 0.5\\
2-\tanh(50(z-0.8)) & z> 0.5\;\;.\\
\end{array}
\right.
\end{equation}
Note that \eqref{num000} define a $C^2([0,1])$
diffusion coefficient,
whereas \eqref{num010} satisfies this condition only approximatively.

\begin{figure}
\begin{picture}(80,180)(-10,0)
\includegraphics[width=0.45\textwidth]{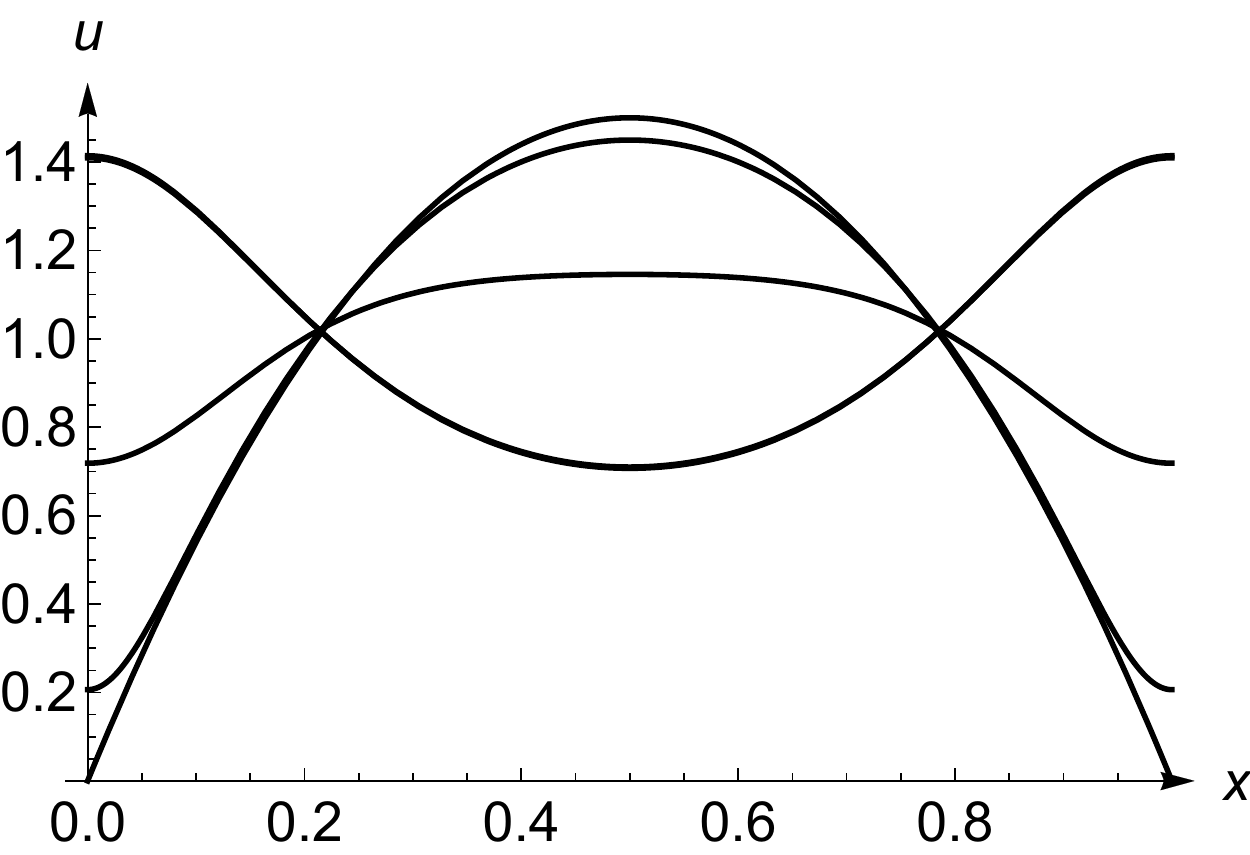}
\hskip 1. cm
\includegraphics[width=0.45\textwidth]{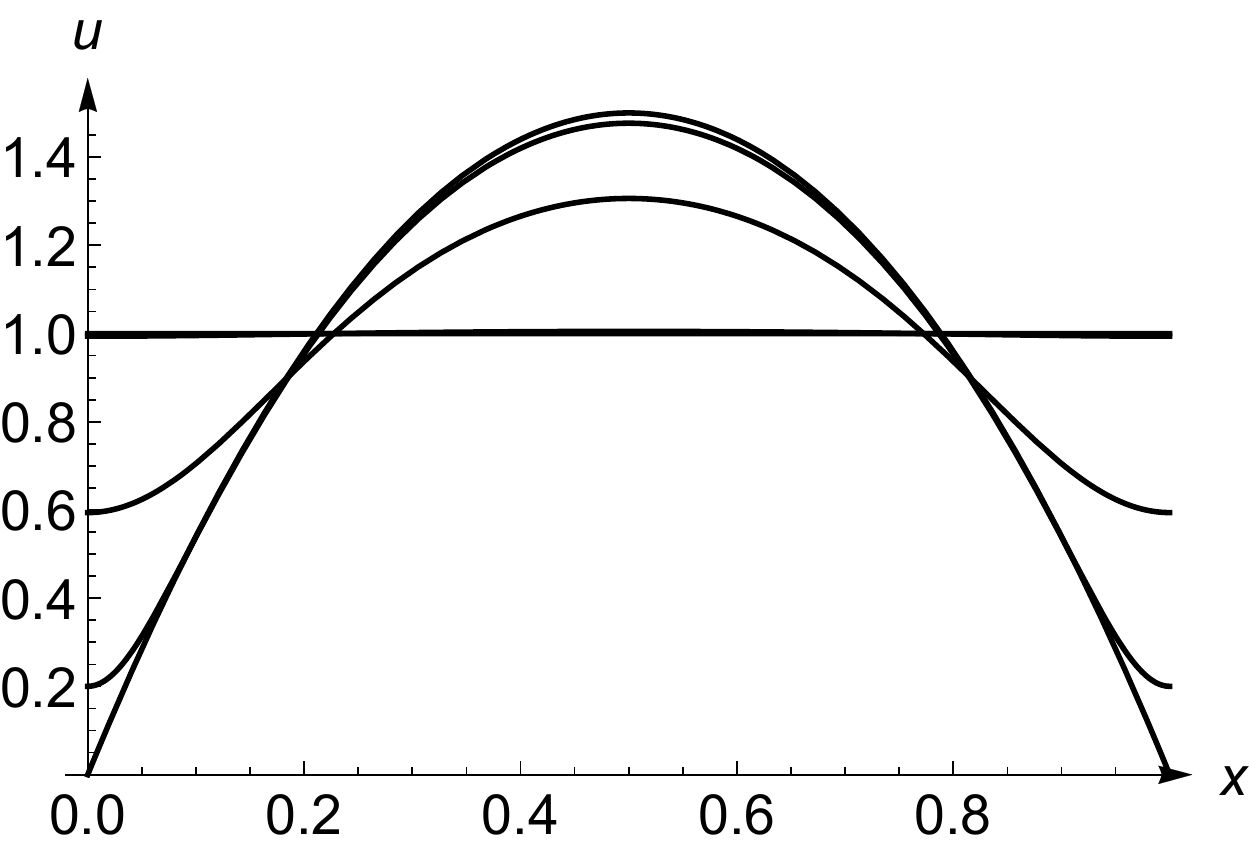}
\end{picture}
\caption{Solution of the periodic Fokker--Planck (left) and
Fick (right) problem with diffusion
coefficient \eqref{num000}.
The five curves report the solution at times
$t=0, 0.001, 0.01, 0.1, 1$,
larger the time higher the
value at the boundaries.
The two curves corresponding to times
$0.1$ and $1$ are coincident.
The initial condition is $u_0(z)=6z(1-z)$.
}
\label{fig:accg02}
\end{figure}

\begin{figure}
\begin{picture}(80,180)(-10,0)
\includegraphics[width=0.45\textwidth]{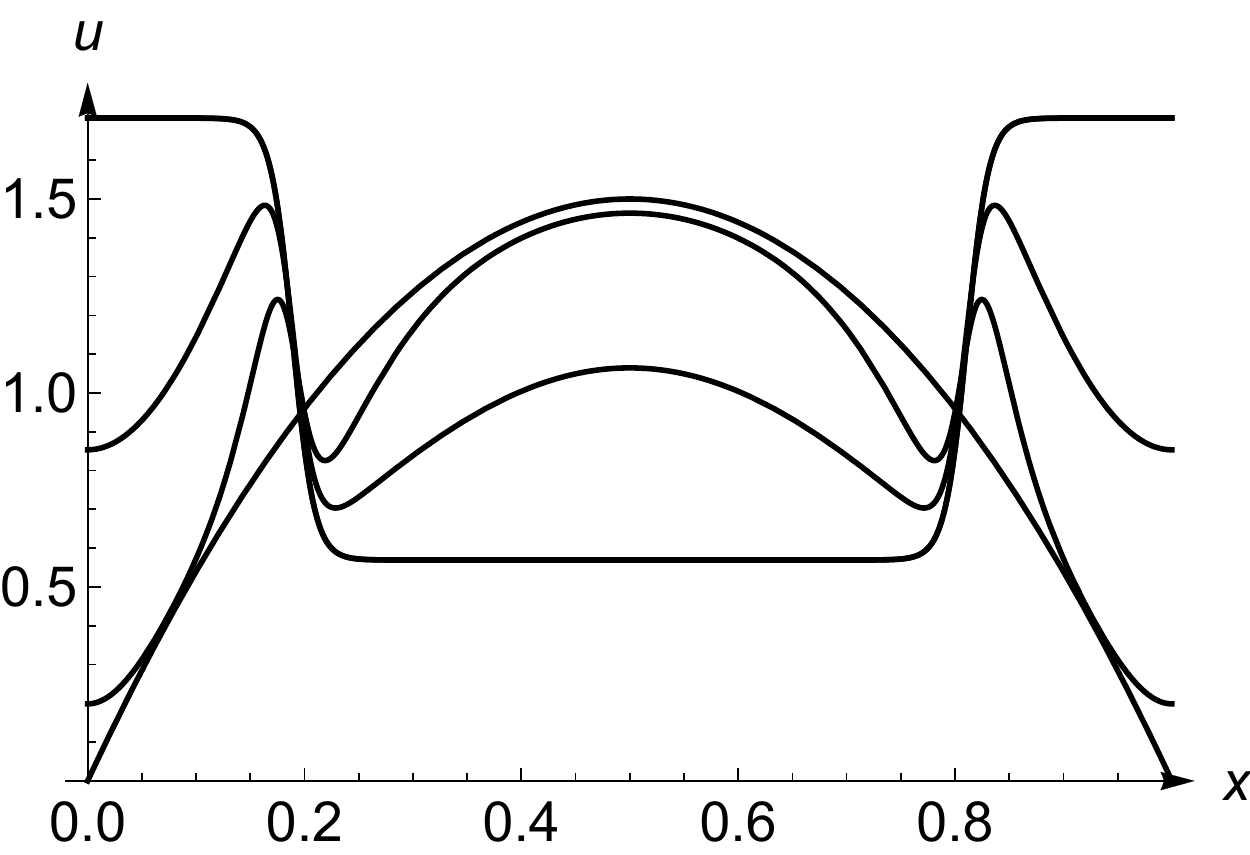}
\hskip 1. cm
\includegraphics[width=0.45\textwidth]{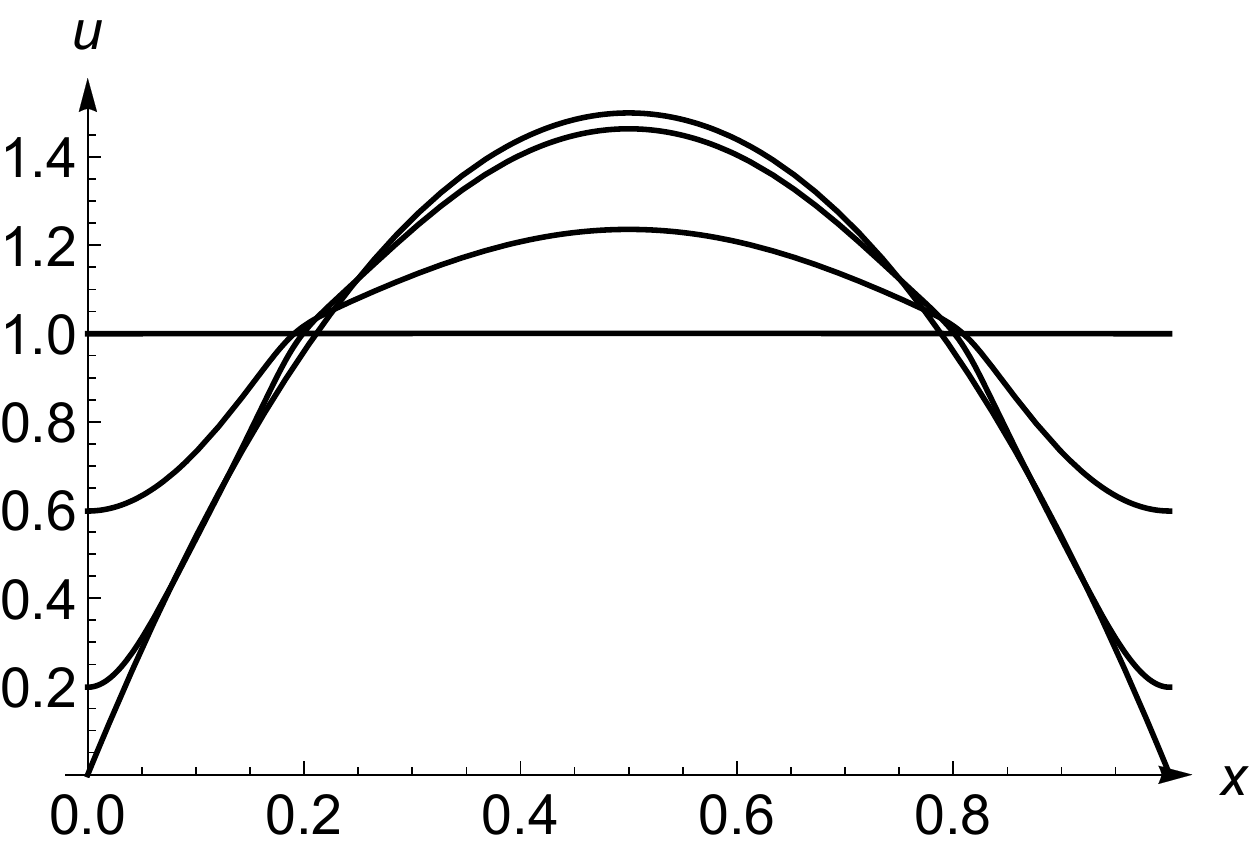}
\end{picture}
\caption{Solution of the periodic Fokker--Planck (left) and
Fick (right) problem with diffusion
coefficient \eqref{num010}.
The five curves report the solution at times
$t=0, 0.001, 0.01, 0.1, 1$,
larger the time higher the
value at the boundaries.
The two curves corresponding to times
$0.1$ and $1$ are coincident.
The initial condition is $u_0(z)=6z(1-z)$.
}
\label{fig:accg01}
\end{figure}

The numerical solution of Fick and Fokker--Planck problems
with diffusion coefficients \eqref{num000} and
\eqref{num010} are reported in
Figures~\ref{fig:accg01} and \ref{fig:accg02}.
The density field profile is reported at times
$t=0,0.001,0.01,0.1$. The profile corresponding to time
$t=0.1$ essentially coincides with the stationary solution.
The numerical solution was found using the NDSolve routine in
Mathematica.
The initial condition is $u_0(z)=6z(1-z)$ in all simulations.
We did not use a constant profile as initial condition, since that would
have been the stationary solution of the Fick diffusion process so that
no dynamics would have been observed.

Note that in the case \eqref{num010}, which mimics a
discontinuous diffusion coefficient, the Fick diffusion
problem has a constant profile as stationary solution,
whereas the Fokker--Planck problem tends to
profile
rapidly varying
in correspondence of the diffusion coefficient ``discontinuities".

The stationary solutions of the Fick and Fokker--Planck equations
can be derived explicitly.
In the Fokker--Planck case
we have that at stationarity $(Du)'$ must be constant. But, for mass
conservation, it must indeed be equal to zero, so that at stationarity
$u(z)=c/D(z)$ where the constant $c$ is such that
\begin{equation}
\label{num020}
\int_0^1\frac{c}{D(z)}\,\rr{d}z
=
\int_0^1u_0(z)\,\rr{d}z
\end{equation}
where, we recall, $u_0$ denotes the initial condition.
In the Fick case
we have that at stationarity $Du'$ must be constant. But, for mass
conservation, it must indeed be equal to zero, so that the stationarity
solution is the constant
$\int_0^1u_0(z)\,\rr{d}z$.

\subsection{SIRW process and Fokker--Planck equation}
\label{s:numsirw}
\par\noindent
We now compare the evolution of the SIRW process introduced in
Section~\ref{s:modello} to that of the Fokker--Planck diffusion equation
on $[0,1]\times[0,1]$.
The stationary profile can be discussed explicitly, indeed, in
Section~\ref{s:invariante} we have stated that at stationarity
the average number
of particles at site $x\in V$ is $b/\alpha(x)=b/D(z_x)$
with $b$ such that
\begin{equation}
\label{num040}
\sum_{x=0}^N
\frac{b}{D(z_x)}
=M
\end{equation}
where, we recall, $M$ is the total number of particles.
Comparing \eqref{num020} and \eqref{num040}
we have that, for $N$ large, $b\approx c/N$.
Hence, for $N$ large
the stationary particle density profiles
$(b/\alpha(x))/(1/N)$ of the SIRW process
is a very good approximation of the Fokker--Planck stationary solution
$c/D(z)$.

\begin{figure}
\begin{picture}(80,180)(-10,0)
\includegraphics[width=0.45\textwidth]{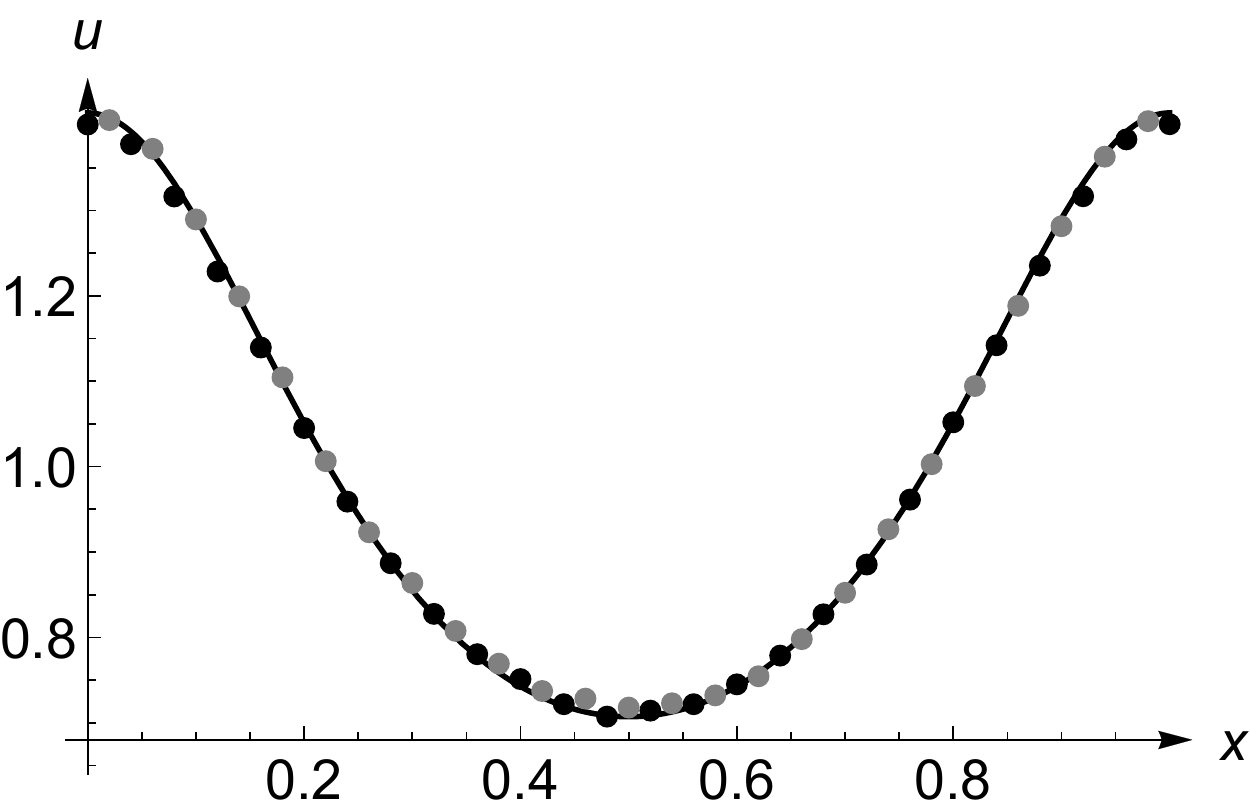}
\hskip 1. cm
\includegraphics[width=0.45\textwidth]{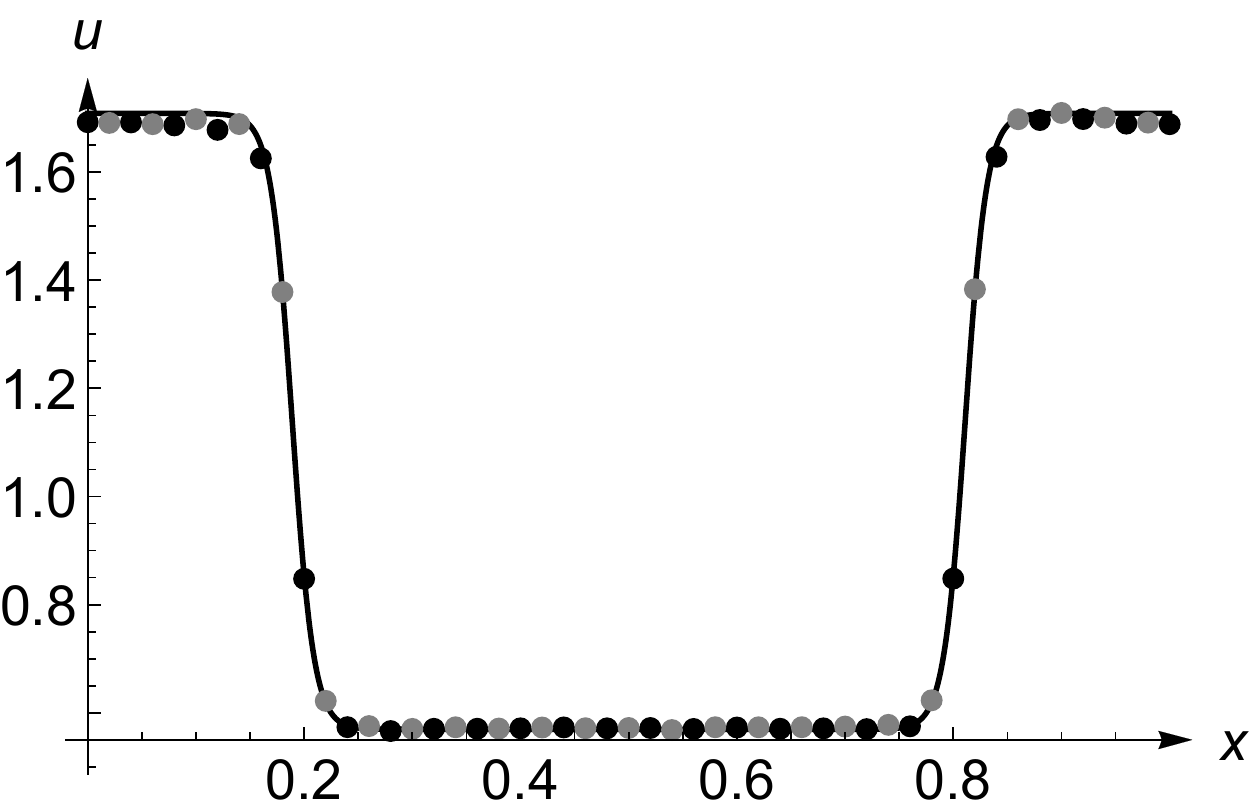}
\end{picture}
\caption{Comparison between the stationary particle profile
of the Random Walk problem multiplied times $N/M$
and the stationary solution of the
Fokker--Planck problem with diffusion
coefficient \eqref{num000} on the left and
\eqref{num010} on the right.
The black curve is the stationary solution
of the Fokker--Planck problem with initial condition $u_0(z)=6z(1-z)$, yielding
a unitary total mass.
Black and gray dots report the stationary state of the corresponding
Random Walk problem with two different initial states: a parabolic
distribution proportional to the one used for the continuous model (black)
and a uniform initial distribution (gray).
The Random Walk has been run on the lattice with $N=101$
with $M=10041$ (black) and $M=10100$ (gray).
}
\label{fig:accg03}
\end{figure}

For the time dependent results
we simulate the stochastic model as follows:
we let $z_x=x/N$ and recall $\alpha(x)=D(z_x)$ for $x\in V$.
Recalling
$\eta_x(t)$ is the number of particles
at site $x$ and time $t$,
we extract an exponential random time $\tau$ with parameter
$\sum_{x=0}^N2\alpha(x)n_x(t)$ and set the time equal to $t+\tau$.
We associate the probability
$2\alpha(y)n_y(t)/\sum_{x=0}^N2\alpha(x)n_x(t)$ to each site $y\in V$
and select at random a site according to such a distribution.
We move a particle from the selected site to one of the two
adjacent sites with probability $1/2$.

To compute the stationary particle profile we let the system evolve
for $10^3$ full sweeps (in one sweep $M$
particles are moved). Then, we average the value of the
number of particles occupying each site of the lattice by
considering one configuration each $10$ sweeps.
The numerical experiment is stopped after about $10^5$
more sweeps.

In Figure~\ref{fig:accg03} we compare the stationary
solution of the Fokker--Planck diffusion processes with
the stationary particle profile of the
Random Walk.
The stationary particle profile is divided times the spacing $1/N$ to get
the stationary particle density profile and is divided times $M$
since the Fokker--Planck diffusion equation has been solved
with an initial state having total mass equal to one.
The match is perfect.

\begin{figure}
\begin{picture}(80,180)(-10,0)
\includegraphics[width=0.45\textwidth]{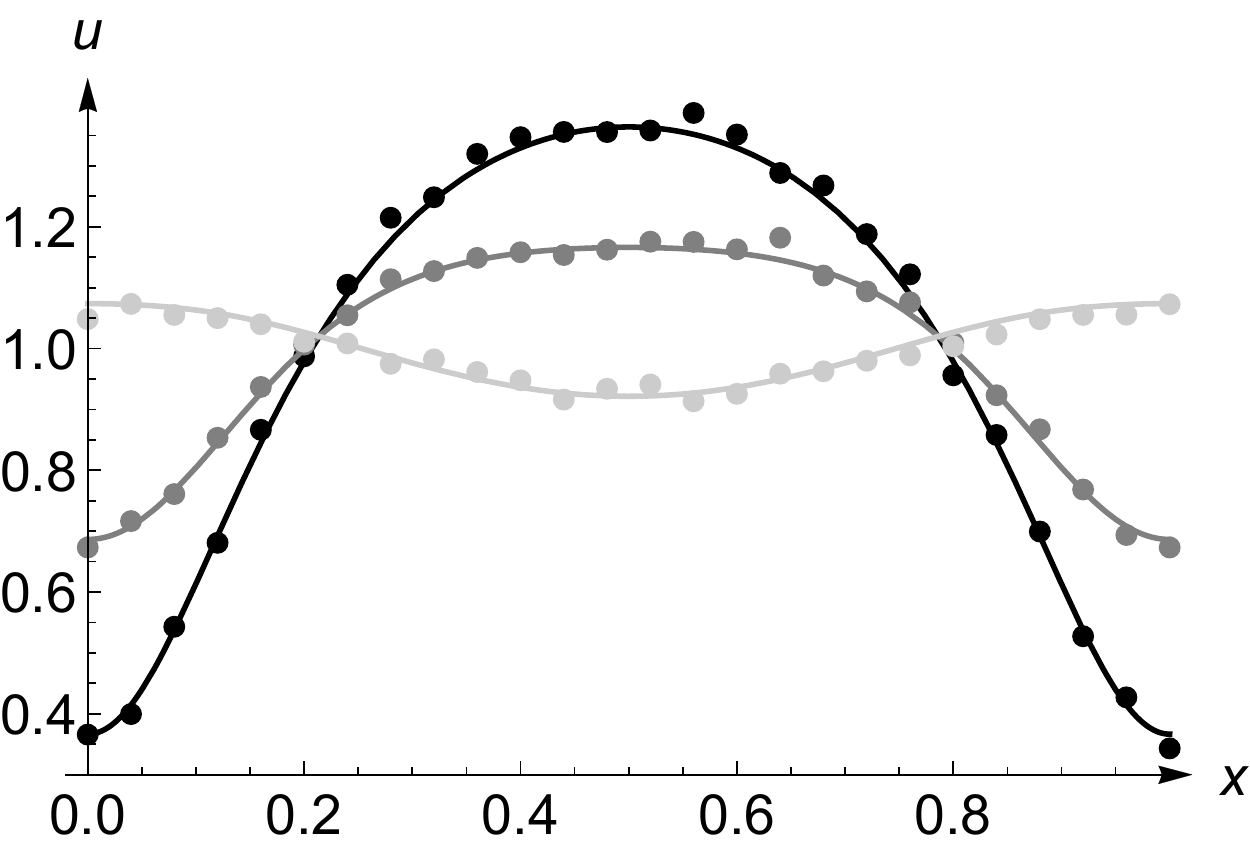}
\hskip 1. cm
\includegraphics[width=0.45\textwidth]{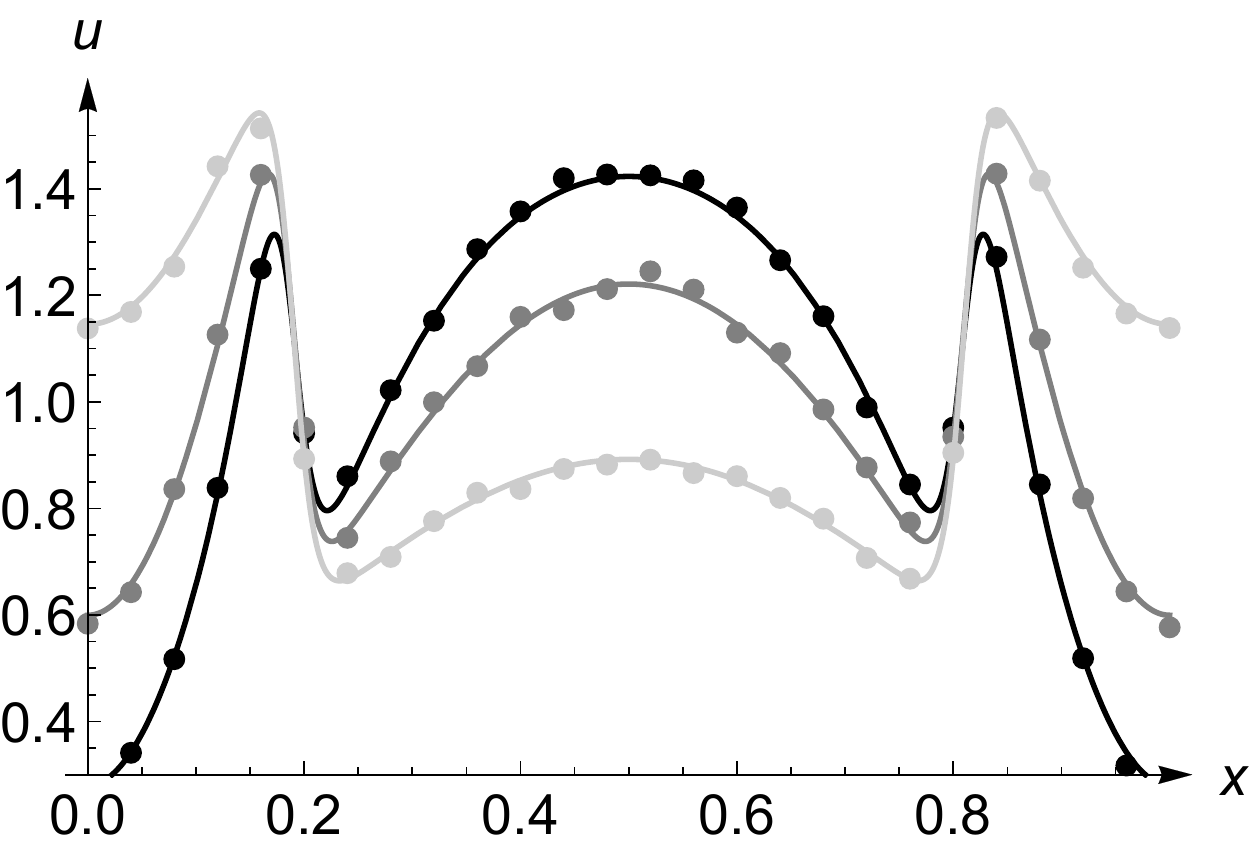}
\end{picture}
\caption{Comparison between the particle profile
of the Random Walk problem multiplied times $N/M$
and the solution of the
Fokker--Planck problem with diffusion
coefficient \eqref{num000} on the left and
\eqref{num010} on the right.
Black, gray, and light gray curves and dots refer respectively to times
$0.003005,0.009221,0.022273$ (left)
and
$0.001967,0.006207,0.015688$ (right).
Solid curves are the solution
of the Fokker--Planck problem with initial condition
$u_0(z)=6z(1-z)$, yielding a unitary total mass.
Black and gray dots report the states of the corresponding
Random Walk problem with the same initial condition.
The Random Walk has been run on the lattice with $N=101$
and $M=10041$.
}
\label{fig:accg04}
\end{figure}

In Figure~\ref{fig:accg04} we compare the evolution of the
Fokker--Planck diffusion processes with
the Random Walk particle profile.
As for the stationary state,
the Random Walk particle profile has been divided
times the spacing $1/N$ to get
the particle density profile and divided times $M$
since the Fokker--Planck diffusion equation has been solved
with an initial state having total mass equal to one.
Moreover, the time measured in the stochastic evolution
has been divided times $N^2$.
Averages have been computed by considering $50$ independent
realizations of the process and averaging the particle distribution
at equal times. The match is striking.

\subsection{EIRW process and Fick diffusion equation}
\label{s:numeirw}
\par\noindent
We now compare the evolution of the EIRW process introduced in
Section~\ref{s:modello} to that of the Fick diffusion equation
on $[0,1]\times[0,1]$. In this case the stationary
state is trivial, indeed, we compute the stationary
particle distribution profile as outlined for
the SIRW case and we find that it is constant with very
high precision.

For the time dependent results
we simulate the stochastic model as follows:
we let $z_x=x/N$ and recall
$Q(\{x,x+1\})=D((z_x+z_{x+1})/2)$ for $x\in V$,
where $\{N,N+1\}$ is identified with $\{N,0\}$.
Recalling
$\eta_x(t)$ is the number of particles
at site $x$ and time $t$,
we extract an exponential random time $\tau$ with parameter
$\sum_{x=0}^N(Q(\{x-1,x\})+Q(\{x,x+1\}))\eta_x(t)$
and set the time equal to $t+\tau$.
We associate the probability
$(Q(\{y-1,y\})+Q(\{y,y+1\}))\eta_y(t)/
\sum_{x=0}^N(Q(\{x-1,x\})+Q(\{x,x+1\}))\eta_x(t)$
to each site $y\in V$
and select at random a site according to such a distribution.
We move a particle from the selected site, say $y$,
to the left
with probability
$Q(\{y-1,y\})/(Q(\{y-1,y\})+Q(\{y,y+1\}))$
and
to the right
with probability
$Q(\{y,y+1\})/(Q(\{y-1,y\})+Q(\{y,y+1\}))$.

\begin{figure}
\begin{picture}(80,180)(-10,0)
\includegraphics[width=0.45\textwidth]{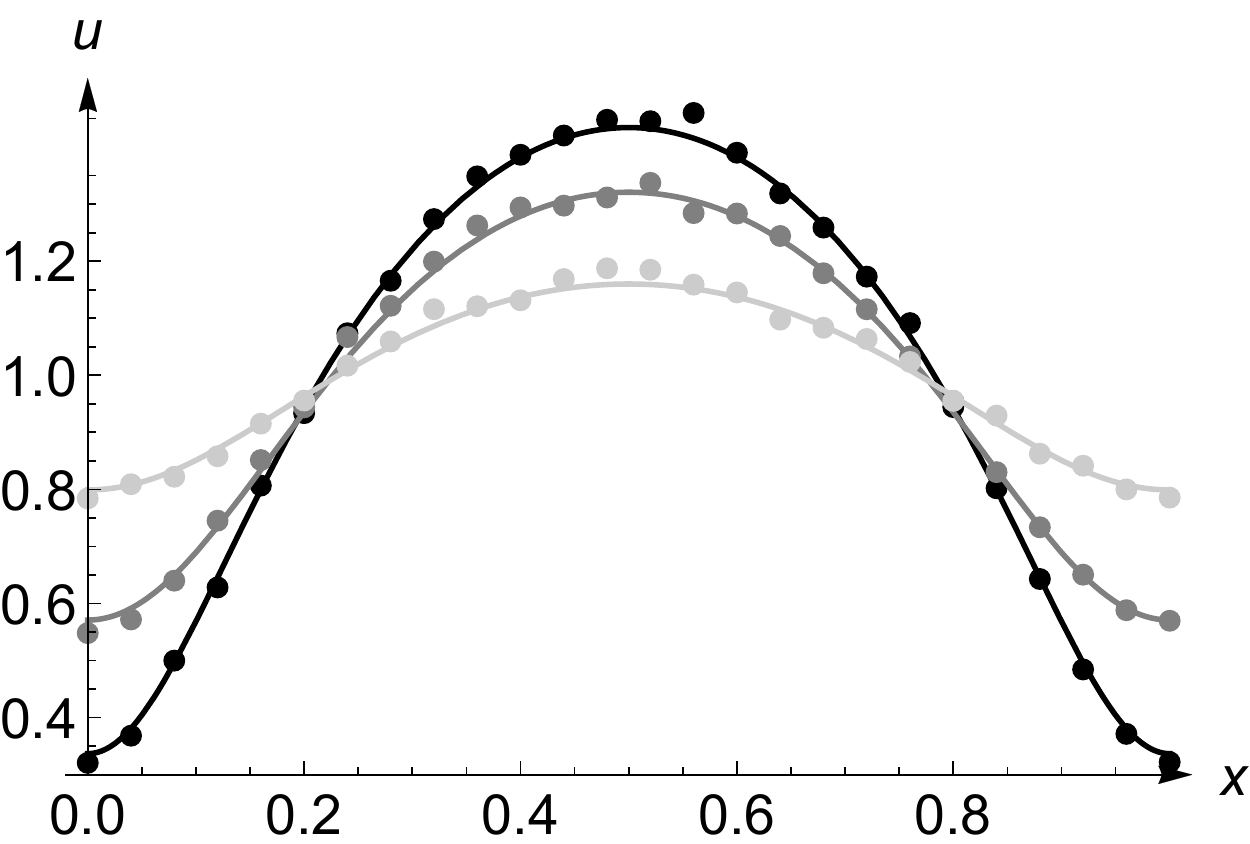}
\hskip 1. cm
\includegraphics[width=0.45\textwidth]{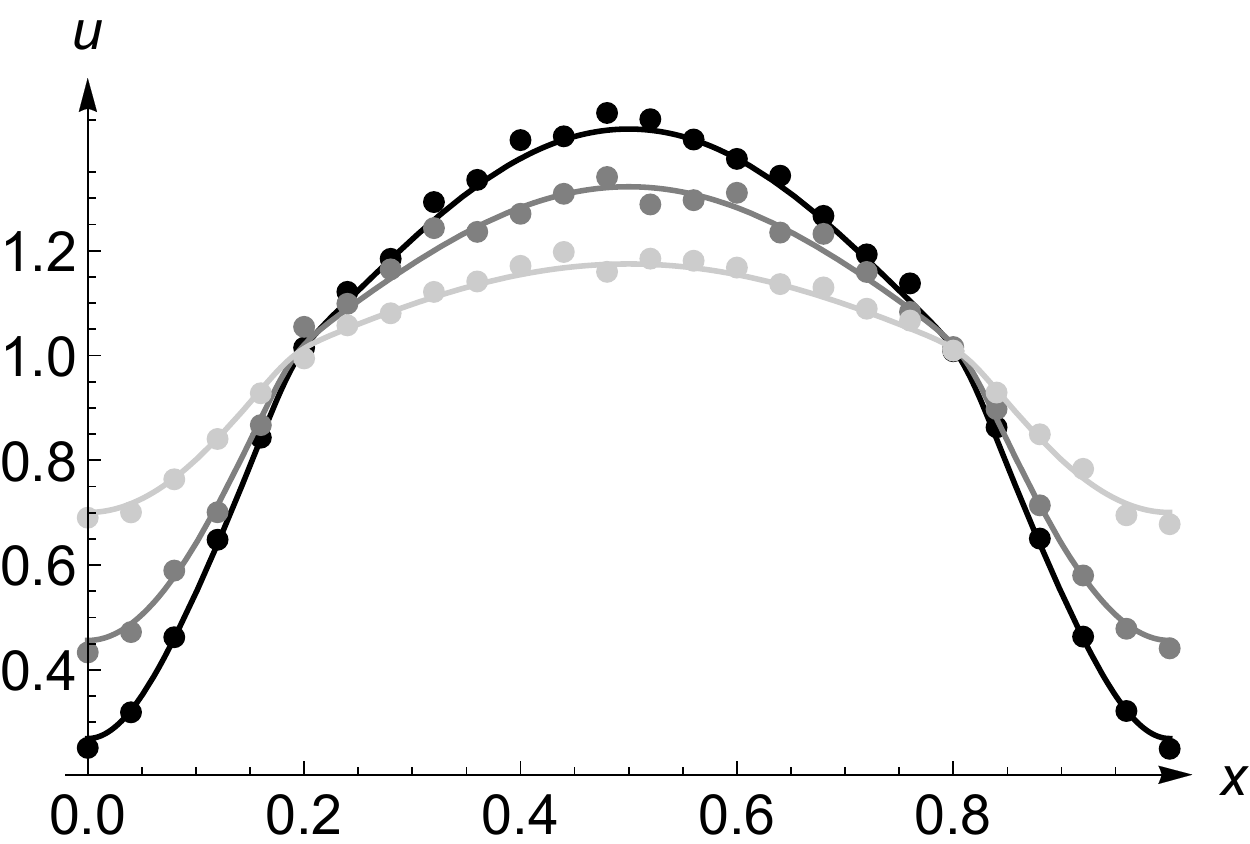}
\end{picture}
\caption{Comparison between the particle profile
of the Random Walk problem multiplied times $N/M$
and the solution of the
Fick problem with diffusion
coefficient \eqref{num000} on the left and
\eqref{num010} on the right.
Black, gray, and light gray curves and dots refer respectively to times
$0.002991,0.009102,0.021696$ (left) and
$0.001916, 0.005856, 0.014081$ (right).
Solid curves are the solution
of the Fick problem with initial condition
$u_0(z)=6z(1-z)$, yielding a unitary total mass.
Black and gray dots report the states of the corresponding
Random Walk problem with the same initial condition.
The Random Walk has been run on the lattice with $N=101$
and $M=10041$.
}
\label{fig:accg05}
\end{figure}

In Figure~\ref{fig:accg05} we compare the evolution of the
Fick diffusion processes with
the Random Walk particle profile.
As for the stationary state,
the Random Walk particle profile has been divided times the
spacing $1/N$ to get
the particle density profile and divided times $M$
since the Fick diffusion equation has been solved
with an initial state having total mass equal to one.
Moreover, the time measured in the stochastic evolution
has been divided times $N^2$.
Averages have been computed by considering $50$ independent
realizations of the process and averaging the particle distribution
at equal times. The match is striking.

\section{Miscellany}
\label{s:additional}
\par\noindent
In this section we collect some interesting remarks on the behavior
of the system that we have studied above.

\subsection{Einstein relation}
\label{s:einst}
\par\noindent
A very general modelization of the presence of an external field
is obtained perturbing the rates as follows. Let $\phi:\Lambda \to \mathbb R^d$ be a smooth vector field that acts on particles. The action of the field is encoded in the perturbed transition rates that are defined as
\begin{equation}\label{ratesfi}
c^\phi_{x,y}(\eta):=c_{x,y}(\eta)e^{\phi_N(x,y)}\,,
\end{equation}
where $\phi_N$ is the discretization \eqref{dvf} of the vector field. Rates that correspond to movements of the particles with an associate positive work of the field are enhanced while instead rates that correspond to movements of the particles with an associate negative work of the field are decreased.

\smallskip
Let us first discuss the influence of an external field in the case of spatially homogeneous models \cite{MFTReview}.
The hydrodynamic scaling limit of diffusive particle systems under the action of a weakly asymmetric external field is associated to equations of the form
\begin{equation}\label{idroM}
\partial_t\rho=\nabla\cdot \left(\mathbb D(\rho)\nabla \rho\right)-2\nabla\cdot \left(\mathbb M(\rho)\phi\right)\,.
\end{equation}
The symmetric and positive definite matrix $\mathbb D$ is the diffusion matrix while the symmetric and positive definite matrix
$\mathbb M$ is the mobility matrix. For independent particles we have that the diffusion matrix coincides with the identity matrix $\mathbb D=\mathbb I$ while instead $\mathbb M=\rho\mathbb I$.

In the homogeneous case a relevant thermodynamic relationship is the so called Einstein relation between the diffusion matrix and the mobility given by
\begin{equation}\label{Ein}
\mathbb D(\rho)=\mathbb M(\rho)f''(\rho)\,,
\end{equation}
that says that the two matrices $\mathbb D$ and $\mathbb M$ are
proportional and the proportionality factor
is the second derivative of the density of free energy $f$ (that is $f(\rho)=\rho\log\rho$ in the independent particles case as discussed after \eqref{Vro}).

Let us now move to the spatial inhomogeneous case.
An interesting way of writing the hydrodynamic equation \eqref{idro} is obtained computing the gradient appearing there, getting
$$
\partial_t\rho=\nabla\cdot\left(\alpha\mathbb Q\nabla\rho\right)
+\nabla\cdot\left(\alpha\rho\mathbb Q\nabla\log\alpha\right)\,.
$$
It is very natural to interpret this equation introducing
the space dependent diffusion matrix $\mathbb D(x,\rho)=\alpha(x)\mathbb Q(x)$ and the space dependent mobility matrix $\mathbb M(x,\rho)=\alpha(x)\rho(x)\mathbb Q(x)$. Note that they satisfy the Einstein relation for each $x\in \Lambda$. Indeed recalling that the density of free energy is $f(\rho)=\rho\log\rho$ for independent particles we have
$$
\mathbb D(x,\rho)=\mathbb M(x,\rho)f''(\rho)\,, \qquad \forall x\in\Lambda\ , \forall \rho\,.
$$
With this identification we have that the inhomogeneity determines
space dependent diffusion and mobility matrices. The form of these matrices depend both on the weights on the edges and on the weights on the vertices.  The spatial inhomogeneity of the material generates however also an external field  that depends just on the site inhomogeneity. This external field is
exactly $-(1/2)\nabla\log\alpha$.

We show that this interpretation is correct. This is done switching on a weak external field and showing that the hydrodynamic equation is modified
with the appearance of a term proportional to the mobility matrix $\mathbb M(x,\rho)$ like in the homogeneous case \eqref{idroM}.
In presence of an external field the rates are modified according to \eqref{ratesfi} and correspondingly the instantaneous current becomes
\begin{equation}\label{istfi}
j_\eta^\phi(x,y)=c_{x,y}(\eta)e^{\phi_N(x,y)}-c_{y,x}(\eta)e^{\phi_N(y,x)}\,.
\end{equation}
Recall that the values of $\phi_N$ are infinitesimal \eqref{dvf} so that we have
$$
e^{\phi_N(x,y)}=1+\phi_N(x,y)+o(1/N)\,.
$$
The instantaneous current is therefore
\begin{equation}\label{bu}
j_\eta^\phi(x,y)=j_\eta(x,y)+\left(c_{x,y}(\eta)+c_{y,x}(\eta)\right)\phi_N(x,y)+o(1/N)\,.
\end{equation}
Substituting \eqref{bu} to $j_\eta$ in the second term in \eqref{aida}
and ignoring negligible terms we obtain the extra factor
$$
\frac{1}{N^d}\sum_{x\in \Lambda_N}\int_0^tds\, \alpha(x)\eta_s(x)\Big[N^2\sum_{y\in C(x)}
Q(\{x,y\})(f(s,y)-f(s,x))\phi_N(x,y)\Big]\,.
$$
With computations similar to the ones in the proof of Theorem \ref{ilth} we have that the term inside squared parenthesis in the above formulas
coincides up to uniform infinitesimal terms with
$$
2 \mathbb Q (x)\phi(x)\cdot \nabla f(x,s)\,.
$$
This means that the hydrodynamic equation in presence of a weak external field becomes
$$
\partial_t\rho=\nabla\cdot \left(\mathbb D(x,\rho)\nabla\rho\right)
-2\nabla\cdot\left(\mathbb M(x,\rho)\left(\phi
-\frac 12\nabla\log\alpha\right)\right)\,.
$$
We deduce that $\mathbb M(x,\rho)$ plays the role of the mobility matrix and we obtain
a version of the Einstein relation in the non--homogeneous framework.

\subsection{Alternative proof}
\label{s:dimo-alt}
\par\noindent
Since we are considering a system of independent particles we can obtain an alternative  proof under some special initial conditions.
In particular we consider the case when the initial condition is obtained with identical particles distributed independently. Note that Theorem
\ref{ilth} covers much more general initial conditions.
In this special case, the collective behavior of the occupation variables can be deduced by the scaling behavior of one single particle. We could however not find a specific
reference for the scaling limit of one single IRW. The following is a sketch of the general argument that can be used once the scaling limit of one single IRW is established.

Consider the initial condition $\rho_0$ in the hydrodynamic equation \eqref{idro} and define the corresponding
probability measure $\hat\rho_0(y)=\rho_0(y)/\int_\Lambda\rho_0(x)dx$.
We consider a sequence of probability measures $p_N$ on $\Lambda_N$ such that
$$
\sum_{x\in \Lambda_N}p_N(x)\delta_x\to \hat\rho_0(y)dy\,,
$$
where the convergence is the weak one.

A simple generalization of the law of large numbers says the following. Suppose that for each natural number $N$ we have a random variable $Y^N_1$ taking values on a Polish space $\mathcal A$
and such that the law of $Y^N_1$ is converging weakly to $\gamma\in \mathcal M^1(\mathcal A)$ when $N$ diverges. We called $\mathcal M^1(\mathcal A)$ the set of probability measures on $\mathcal A$ with the Borel sigma algebra. For each $N$ let us consider $(Y^N_i)_{i\in \mathbb N}$ be a collection of i.i.d. random variables each of them having the same distribution of $Y^N_1$. Then we have that
\begin{equation}\label{ugs}
\frac{1}{N}\sum_{i=1}^N\delta_{Y^N_i}\to \gamma
\end{equation}
where the convergence is the weak one in probability (indeed even a.e.). More precisely the above statement
means that for any continuous and bounded function $f:\mathcal A\to \mathbb R$ we have
$$
\lim_{N\to+\infty}P\left(\left|\frac{\sum_{i=1}^Nf(Y^N_i)}{N}-\int_\mathcal Ad\gamma(a)f(a)\right|>\epsilon\right)=0\,, \qquad \forall\,\epsilon >0\,.
$$

We consider at time zero
$\left(\int_\Lambda \rho_0(x)dx \right)N^d$ particles independently distributed and each of them distributed on $\Lambda_N$ according to $p_N$. Let $X^N_i(0)$ be the random position in $\Lambda_N$ at time $0$ of the particle number $i$.
We consider $D([0,t]; \Lambda)$ the
Skorokhod space of trajectories. The trajectory of the particle number $i$ is denoted by $X^N_i(\cdot):=(X^N_i(s))_{s\in [0,t]}$. This is a random variable taking values on $D([0,t]; \Lambda)$. We consider each particle evolving with an IRW with rates of jump accelerated by a factor of $N^2$.

We assume in this argument that the law of the trajectory of one single particle converges to the law $\mathbb P_{\hat\rho_0}$ of a diffusion process (see next Section \ref{s:revdif} ) with  initial distribution $\hat\rho_0$ and
Kolmogorov evolution equation for the distribution given by the hydrodynamic equation \eqref{idro} (with initial condition $\hat\rho_0$). This is an assumption because we could not find a precise reference for this result.

We have therefore the convergence \eqref{ugs} that in this specific case implies that a.e., and therefore in probability, we have
$$
\gamma_N:=\frac{1}{\left(\int_\Lambda \rho_0(x)dx\right)N^d}\sum_{i=1}^{\left(\int_\Lambda \rho_0(x)dx\right)N^d}\delta_{X^N_i(\cdot)}\stackrel{N\to+\infty}{\to} \mathbb P_{\hat\rho_0}\,.
$$
Consider a continuous and bounded function $f:\Lambda\to \mathbb R$ and the functional $F:D([0,t],\Lambda)\to \mathbb R$ defined by
$F(X(\cdot)):=f(X(s))$ where $s\in [0,t]$ is a fixed time.
The functional $F$ is not continuous with respect to the Skorokhod
topology. We have however that under the probability measure $\mathbb P_{\hat\rho_0}$ the set of discontinuous points of this functional has probability zero. This is because the probability
$\mathbb P_{\hat\rho_0}$ is concentrated on continuous paths. We can therefore deduce by Portmanteau Theorem that a.e.,
and therefore in probability, we have the convergence
\begin{equation}
E_{\gamma_N}(F)\stackrel{N\to +\infty}{\to}\mathbb E_{\hat\rho_0}\left(F\right)=\int_\Lambda \hat\rho(x,s)f(x) \,dx
\end{equation}
where $\hat\rho(s)$ is the solution of \eqref{idro} with initial condition $\hat\rho_0$.
We can deduce the hydrodynamic behavior of the model observing that
$$
\int_\Lambda fd\pi_N(\eta_s)
=\frac{1}{N^d}\sum_{i=1}^{\left(\int_\Lambda \rho_0(x)dx \right)N^d}f(X^N_i(s))
=
\left(\int_\Lambda \rho_0(x)dx \right) E_{\gamma_N}(F)\,.
$$

\subsection{Reversible diffusions}
\label{s:revdif}
\par\noindent
As we observed in the previous section, in the case of independent particles the hydrodynamic equation describing
the collective behavior of several particles is linear and coincides with the equation of the evolution
of the probability distribution of one single particle. Since the scaling limit of one single particle
is a diffusion process and since our discrete models are reversible it is natural to compare the class of hydrodynamic equations that we obtained
with the possible Fokker Plank equations associated to reversible diffusions.

At the microscopic level we obtained that the reversibility condition has a geometric interpretation. We have indeed that the models are reversible if and only if
the rates are chosen according to some weights associated to the edges and the vertices of the graph (see Lemma \ref{carrevw}). In the case of continuous diffusion process we have a similar geometric characterization of reversibility, indeed reversible diffusions can be parameterized by a positive function
and a symmetric and positive definite matrix, that can be interpreted as
the metric tensor. These are the continuous counterparts of the discrete weights on the graph.

We refer to \cite{G2009, K1981} for the basic facts about diffusion processes. For simplicity we consider the processes on $\mathbb R^d$
instead that on the torus.
Consider a diffusion process of the form
\begin{equation}\label{stoceq}
dX_t= A(X_t)dt + \mathbb B(X_t)dW_t
\end{equation}
where $A=(A_1(x), \dots ,A_d(x))$ is a smooth vector field, $\mathbb B(x)$ is a $d\times d$ matrix smoothly depending on $x$ and $W=(W_1, \dots .W_d)$
is a $d$ dimensional standard Brownian motion.
The corresponding Fokker Plank equation describing the evolution of the probability distribution is given by
\begin{equation}\label{fpd}
\partial_t\rho=\nabla \cdot \left[-\rho A+ C\right]
\end{equation}
where
$$
C_i=\frac 12 \sum_{j=1}^d \partial_{x_j}
\left(\rho \left(\mathbb B\mathbb B^T\right)_{i,j}\right)\,, \qquad i=1,\dots ,d\,.
$$
Note that while in the equation \eqref{stoceq} appears the matrix $\mathbb B$, the evolution of the probability distribution
depends just on the symmetric matrix $\mathbb B\mathbb B^T$.
The condition of reversibility (see \cite{G2009, K1981}) is that the vector
\begin{equation}\label{fixA}
F_i:=\sum_{k=1}^d(\mathbb B\mathbb B^T)^{-1}_{i,k}\left[2A_k-\sum_j \partial_{x_j}(\mathbb B\mathbb B^T)_{k,j}\right]\,, \qquad i=1,\dots ,d\,,
\end{equation}
is of gradient type.
In this case, under additional confinements assumptions,  the stationary solution of the Fokker Planck equation is
$$
\bar\rho(x)=\frac{e^{-\psi(x)}}{Z}
$$
where $F=-\nabla \psi$.
We have therefore that all the reversible diffusion processes
can be parameterized in terms of the function $\psi$ and the symmetric and positive definite matrix
$\mathbb B\mathbb B^T$. This is because you can fix arbitrarily these two objects and then $A$ is completely determined by \eqref{fixA}. If we use instead the positive function $\alpha$ related  to $\psi$ by $\psi=\log \alpha$ and the symmetric positive definite matrix
$\mathbb Q(x)=\mathbb B\mathbb B^T(x)\alpha^{-1}(x)/2$ we have that the Fokker Plank equation
\eqref{fpd} is given by
\begin{equation}\label{gemella}
\partial_t\rho=\nabla\cdot \left(\mathbb Q\nabla\left(\alpha \rho\right)\right)
\end{equation}
that is exactly of the type of our hydrodynamic equation \eqref{idro}.
It is important to note however that in \eqref{idro} the matrix $\mathbb Q$ has to be diagonal while instead
this is not the case in \eqref{gemella}. As we will discuss in the next section this is due to the special lattice that we are considering
in Theorem \ref{ilth}. We can obtain non diagonal matrices considering different lattices.

\subsection{Different lattices}
\label{s:diflat}
\par\noindent
Here we show that we obtained just equations with diagonal matrices $\mathbb Q$ since we are considering a squared lattice. We briefly discuss how to handle different situations obtaining non diagonal matrices $\mathbb Q$.
From the
proof of Theorem \eqref{ilth}  we known that the basic computation to identify the limiting equation is to
approximate up to uniformly infinitesimal corrections the term inside square parenthesis in \eqref{ms} that is
\begin{equation}\label{u}
N^2\sum_{y\in C(x)}Q(\{x,y\})\left(f(s,x)-f(s,y)\right)\,.
\end{equation}

The generalized framework that we consider now is a lattice having vertices coinciding again with $\Lambda_N$ but having more edges than the usual square lattice.
This corresponds to allowing more possible jumps to the particles. The graph on which the particles are evolving is obtained as follows. We start with $\mathbb Z^d$
with more edges with respect to the usual ones that are connecting just the minimal distance vertices. The collection of directed edges exiting form any vertex $x\in \mathbb Z^d$
are of the form $(x,x+\tilde v^i)$ where $\tilde v^i$ for $i=1,\dots ,k$ is a collection of vectors such that $x+\tilde v^i\in \mathbb Z^d$. Since
we are always requiring that an un-oriented edge can be crossed on both directions then $k$ has to be necessarily an even number and for any vector $\tilde v^i$
there should be a corresponding label $j$ such that $\tilde v^j=-\tilde v^i$ so that  both $(x,x+\tilde v^i)$ and $(x+\tilde v^i,x)$ are elements of the directed edges $E_N$.
The lattice that we consider is obtained scaling by a factor of $N^{-1}$ this lattice. In particular we call $v^i:=N^{-1}\tilde v^i$.

We have therefore that on each lattice site $x\in \Lambda_N$
there are $k$ different edges incident that correspond to $k$
possible jumps of one particle from x to $x+v^j$, $j=1, \dots ,k$. In the case of the square lattice we had $k=2d$ and each $v^j$ is equal to $\pm e^i$ for some $i$.
Note that we have now
$|C(x)|=k$. More general frameworks are of course possible but for simplicity we restrict to this generalization.

We need to give weights to the vertices and the edges of the lattice suitably discretizing smooth objects. The weights on the vertices are associated as before computing a smooth function $\alpha$ on the corresponding point. For the edges we need to generalize the construction done before.

We consider a smooth metrics  $\mathcal Q(x)$ that is a symmetric and positive definite $d \times d$ matrix depending in a regular way ($C^2$ for example) on the continuous variable $x\in \Lambda$. We associate the weight to an edge of the form
$\{x,x+v^i\}$ as
\begin{equation}\label{pesiqq}
Q(\{x,x+v^i\}):=\tilde v^i\cdot \mathcal Q(x+v^i/2)\tilde v^i=:Q^i(x+v^i/2)\,.
\end{equation}
The appearance of the $\tilde v^i$ vectors above is due to the fact that we have $|v^i|\sim 1/N$ (since the vectors without tilde are comparable with the mesh of the lattice)
and we want that the weights to be associated to the edges are not infinitesimal in $N$ but are of order one. The last equality in \eqref{pesiqq}
is just the definition of a shorthand for the weights.
With a suitable Taylor expansion we get that \eqref{u} coincides up to uniformly infinitesimal terms with
\begin{equation}\label{denr}
N^2\sum_{i=1}^k\left(Q^{i}(x)+\nabla Q^{i}(x)\cdot\frac{v^i}{2}\right)\left(\nabla f(x)\cdot v^i+\frac{1}{2}v^i\cdot H(x)v^i\right)\,,
\end{equation}
where $H(x)$ is the Hessian matrix at $x$ of the function $f$ having elements $\left(H(x)\right)_{l,m}=\partial_{x_l}\partial_{x_m}f(x)$.

Recall that $k$ is an even number ad if $v^i$ is the vector associated to a possible jump then also $-v^i$ is a vector associated to a possible jump. Due to this, we
have that the leading term in the product in \eqref{denr}
that is
\begin{equation}\label{red}
N^2\sum_{i=1}^kQ^{i}(x)\nabla f(x)\cdot v^i
\end{equation}
is identically zero. This is because we can pair the edges exiting from $x$ in such a way that if the label $i$ is paired to the label $j$ then $v^i=-v^j$ and consequently $Q^{i}(x)=Q^{j}(x)$. Of the remaining three terms obtained when we develop the product in \eqref{denr} we have that one is infinitesimal.
The two relevant ones that survive are
$$
\frac 12\sum_{i=1}^kQ^{i}(x)Nv^i\cdot H(x)Nv^i+\frac 12\sum_{i=1}^k\left(\nabla Q^{i}(x)\cdot N v^i\right)\left(\nabla f(x)\cdot Nv^i\right)\,.
$$
The above expression coincides up to uniform infinitesimal terms with
$$
\nabla\cdot \left(\mathbb Q(x)\nabla f(x)\right)
$$
where the matrix $\mathbb Q$ is defined as
\begin{equation}\label{Qnonodiag}
\mathbb Q_{l,m}(x)=\frac 12\sum_{i=1}^k Q^{i}(x)\tilde v^i_l\tilde v^i_m\,.
\end{equation}
With the same arguments of the proof of Theorem \ref{ilth}, but using this expansion, we can prove that the limiting equation
is again of the form \eqref{idro} but the matrix $\mathbb Q$ is given by \eqref{Qnonodiag} that in general is non--diagonal.

\subsection{Uphill currents}
\label{s:prb}
\par\noindent
A current is said to move ``uphill'' when particles migrate \textit{up the gradient}, namely towards regions of higher concentration, thus violating the basic tenets of Fick's law of diffusion. The onset of such uphill currents can be traced back to the action of an external field, to the presence of mutual interactions in a multi-component system or, for single-component systems, to  a phase transition, and was recently investigated in a variety of lattice gas models, cf. Refs \cite{CDMP16,CDMP17,CDMP17bis,CC2017,CGGV18}.

We look, here, at the case where two inhomogeneous diffusion processes take place in two intervals of length $\Len>0$, for two concentration functions $\uf$, $\uk$, being connected by conditions of equality of concentration and of flux at the two endpoints. The latter is meant in the sense that the outflux of $\uf$ equals the influx of $\uk$. However, $\uf$ solves Fick's equation, while $\uk$ solves a Fokker-Planck type equation. The diffusivities are assumed to be piecewise constant.

We consider the stationary case, see also Refs. for a more general discussion about the observation of uphill currents.

Thus the problem is, in a distributional formulation,
\begin{alignat}2
  \label{eq:pde_f}
  -(\Df \uf_{x})_{x}
  &=
  0
  \,,
  &\qquad&
  0<x<\Len
  \,,
  \\
  \label{eq:pde_k}
  -(\Dk \uk)_{xx}
  &=
  0
  \,,
  &\qquad&
  0<x<\Len
  \,,
  \\
  \label{eq:conc_0}
  \uf(0)
  &=
  \uk(0)
  \,,
  &\qquad&
  \\
  \label{eq:conc_L}
  \uf(\Len)
  &=
  \uk(\Len)
  \,,
  &\qquad&
  \\
  \label{eq:flux_0}
  \Df \uf_{x}(0)
  &=
  -
  \Dk \uk_{x}(0)
  \,,
  &\qquad&
  \\
  \label{eq:flux_L}
  \Df \uf_{x}(\Len)
  &=
  -
  \Dk \uk_{x}(\Len)
  \,.
  &\qquad&
\end{alignat}
Here
\begin{equation}
  \label{eq:Df}
  \Df(x)
  =
  \Df_{1}
  \chi_{(0,\Lf)}(x)
  +
  \Df_{2}
  \chi_{(\Lf,\Len)}(x)
  \,,
\end{equation}
and
\begin{equation}
  \label{eq:Dk}
  \Dk(x)
  =
  \Dk_{1}
  \chi_{(0,\Lk)}(x)
  +
  \Dk_{2}
  \chi_{(\Lk,\Len)}(x)
  \,,
\end{equation}
for given positive constants $\Df_{i}$, $\Dk_{i}$, and for $\Lf$, $\Lk\in(0,\Len)$.

We assume here $\Dk_{1}\not=\Dk_{2}$; see also Remark~\ref{r:trivial}.

We refer to the following weak formulation of
this problem: find $\uf\in H^{1}(0,L)$, $\uk\in L^{\infty}(0,L)$ such
that $\Dk \uk\in H^{1}(0,L)$ and
\begin{equation}
  \label{eq:pde_weak}
  \int_{0}^{L}
  \{
  \Df
  \uf_{x}
  \zeta_{x}
  +
  (\Dk\uk)_{x}
  \eta_{x}
  \}
  d x
  =
  0
  \,,
\end{equation}
for all $\zeta$, $\eta\in C^{1}([0,L])$ such that $\zeta(0)=\eta(0)$
and $\zeta(L)=\eta(L)$. Here $H^{1}(0,L)$ is the standard space of
square integrable functions with square integrable Sobolev derivative,
which is known to be embedded in $C([0,L])$. Then, also using our
assumptions on $\Dk$, we impose \eqref{eq:conc_0} and
\eqref{eq:conc_L} in a classical pointwise sense.

It follows from straightforward reasoning and from \eqref{eq:pde_weak}
that $K\uf_{x}$ and $(\Dk\uk)_{x}$ are constant in $(0,L)$. Thus
invoking the definitions of $\Df$ and $\Dk$, we recover in the classical sense
\begin{alignat}2
  \label{eq:pde2_f}
  -
  \uf_{xx}
  &=
  0
  \,,
  &\qquad&
  \text{in $(0,\Lf)\cup(\Lf,\Len)$,}
  \\
  \label{eq:cint_f}
  \uf(\Lf-)
  &=
  \uf(\Lf+)
  \,,
  &\qquad&
  \\
  \label{eq:fint_f}
  \Df_{1} \uf_{x}(\Lf-)
  &=
  \Df_{2} \uf_{x}(\Lf+)
  \,,
  &\qquad&
\end{alignat}
and
\begin{alignat}2
  \label{eq:pde2_k}
  -
  \uk_{xx}
  &=
  0
  \,,
  &\qquad&
  \text{in $(0,\Lk)\cup(\Lk,\Len)$,}
  \\
  \label{eq:cint_k}
  \Dk_{1}
  \uk(\Lk-)
  &=
  \Dk_{2}
  \uk(\Lk+)
  \,,
  &\qquad&
  \\
  \label{eq:fint_k}
  \Dk_{1} \uk_{x}(\Lk-)
  &=
  \Dk_{2} \uk_{x}(\Lk+)
  \,.
  &\qquad&
\end{alignat}
Note that more generally one should write e.g., \eqref{eq:fint_k} as
\begin{equation*}
  (\Dk_{1} \uk)_{x}(\Lk-)
  =
  (\Dk_{2} \uk)_{x}(\Lk+)
  \,.
\end{equation*}
but this is not relevant under our assumption of piecewise constant
$\Dk$. A similar remark applies to \eqref{eq:flux_0},
\eqref{eq:flux_L}, which indeed are valid in a pointwise sense.

Clearly problem \eqref{eq:pde_f}--\eqref{eq:flux_L} is invariant for multiplication by a constant, and always has the null solution. Therefore for the sake of precision we'll impose also the following normalization condition
\begin{equation}
  \label{eq:norm}
  \uf(0)
  =
  1
  \,.
\end{equation}
The formulation \eqref{eq:pde2_f}--\eqref{eq:fint_f} yields immediately
\begin{equation}
  \label{eq:uf_pcw}
  \uf(x)
  =
  \left\{
  \begin{alignedat}2
    &\uf_{x}(\Lf-)
    (x-\Lf)
    +
    \uf(\Lf-)
    \,,
    &\qquad&
    0<x<\Lf
    \,,
    \\
    &
    \frac{\Df_{1}}{\Df_{2}}
    \uf_{x}(\Lf-)
    (x-\Lf)
    +
    \uf(\Lf-)
    \,,
    &\qquad&
    \Lf<x<\Len
    \,.
  \end{alignedat}
  \right.
\end{equation}
Instead the formulation \eqref{eq:pde2_k}--\eqref{eq:fint_k} implies
\begin{equation}
  \label{eq:uk_pcw}
  \uk(x)
  =
  \left\{
  \begin{alignedat}2
    &\uk_{x}(\Lk-)
    (x-\Lk)
    +
    \uk(\Lk-)
    \,,
    &\qquad&
    0<x<\Lk
    \,,
    \\
    &
    \frac{\Dk_{1}}{\Dk_{2}}
    \uk_{x}(\Lk-)
    (x-\Lk)
    +
    \frac{\Dk_{1}}{\Dk_{2}}
    \uk(\Lk-)
    \,,
    &\qquad&
    \Lk<x<\Len
    \,.
  \end{alignedat}
  \right.
\end{equation}
The normalization condition and \eqref{eq:conc_0} lead to
\begin{align}
  \label{eq:sys_uf0}
  -
  \uf_{x}(\Lf-)
  \Lf
  +
  \uf(\Lf-)
  &=
    1
    \,,
  \\
  \label{eq:sys_uk0}
  -
  \uk_{x}(\Lk-)
  \Lk
  +
  \uk(\Lk-)
  &=
    1
    \,,
\end{align}
while \eqref{eq:conc_L} gives
\begin{equation}
  \label{eq:sys_ufkL}
  \frac{\Df_{1}}{\Df_{2}}
  \uf_{x}(\Lf-)
  (\Len-\Lf)
  +
  \uf(\Lf-)
  =
  \frac{\Dk_{1}}{\Dk_{2}}
  \uk_{x}(\Lk-)
  (\Len-\Lk)
  +
  \frac{\Dk_{1}}{\Dk_{2}}
  \uk(\Lk-)
  \,.
\end{equation}
Finally both \eqref{eq:flux_0} and \eqref{eq:flux_L} are equivalent to
\begin{equation}
  \label{eq:sys_flux}
  \Df_{1}
  \uf_{x}(\Lf-)
  =
  -
  \Dk_{1}
  \uk_{x}(\Lk-)
  \,.
\end{equation}
Thus we have a linear system \eqref{eq:sys_uf0}--\eqref{eq:sys_flux} of 4 equations in the 4 unknowns
$\uf(\Lf-)$, $\uf_{x}(\Lf-)$, $\uk(\Lk-)$, $\uk_{x}(\Lk-)$.

Its solution is
\begin{align*}
  \uf(\Lf-)
  &=
    \frac{
    (\Dk_{1}\Df_{2}-\Dk_{2}\Df_{1})
    \Lf
    +
    \Df_{1}
    (\Dk_{2}+\Df_{2})
    \Len
    }{
    \Dk_{2}
    (\Df_{2}-\Df_{1})
    \Lf
    +
    \Df_{1}
    (\Dk_{2}+\Df_{2})
    \Len
    }
    =
    1
    +
    \frac{
    \Df_{2}
    (\Dk_{1}-\Dk_{2})
    \Lf
    }{
    \Dk_{2}
    (\Df_{2}-\Df_{1})
    \Lf
    +
    \Df_{1}
    (\Dk_{2}+\Df_{2})
    \Len
    }
    \,,
  \\
  \uf_{x}(\Lf-)
  &=
    \frac{
    \Df_{2}
    (\Dk_{1}-\Dk_{2})
    }{
    \Dk_{2}
    (\Df_{2}-\Df_{1})
    \Lf
    +
    \Df_{1}
    (\Dk_{2}+\Df_{2})
    \Len
    }
    \,,
  \\
  \uk(\Lk-)
  &=
    1
    +
    \frac{1}{\Dk_{1}}
    \,
    \frac{
    \Df_{1}
    \Df_{2}
    (\Dk_{2}-\Dk_{1})
    \Lk
    }{
    \Dk_{2}
    (\Df_{2}-\Df_{1})
    \Lf
    +
    \Df_{1}
    (\Dk_{2}+\Df_{2})
    \Len
    }
    \,,
  \\
  \uk_{x}(\Lk-)
  &=
    \frac{1}{\Dk_{1}}
    \,
    \frac{
    \Df_{1}
    \Df_{2}
    (\Dk_{2}-\Dk_{1})
    }{
    \Dk_{2}
    (\Df_{2}-\Df_{1})
    \Lf
    +
    \Df_{1}
    (\Dk_{2}+\Df_{2})
    \Len
    }
    \,,
\end{align*}
provided
\begin{equation*}
  \Dk_{2}
  (\Df_{2}-\Df_{1})
  \Lf
  +
  \Df_{1}
  (\Dk_{2}+\Df_{2})
  \Len
  \not=
  0
  \,.
\end{equation*}
But
\begin{equation*}
  \Dk_{2}
  (\Df_{2}-\Df_{1})
  \Lf
  +
  \Df_{1}
  (\Dk_{2}+\Df_{2})
  \Len
  =
  \Dk_{2}
  \Df_{2}
  \Lf
  +
  \Df_{1}
  \Df_{2}
  \Len
  +
  \Df_{1}
  \Dk_{2}
  (\Len-\Lf)
  >0
  \,,
\end{equation*}
since $\Len>\Lf$.

We remark that each one of $\uf_{x}(x)$, $x\not=\Lf$, and $\uk_{x}(x)$, $x\not=\Lk$, has constant sign; the two signs always differ.
This remark does not imply that $\uk$ is monotonic, in view of its discontinuous character.

We may also compute
\begin{equation*}
  \uf(L)
  =
  \uk(L)
  =
  1
  +
  (\Dk_{1}-\Dk_{2})
  \frac{
    (\Df_{2}-\Df_{1})
    \Lf
    +
    \Df_{1}
    \Len
  }{
    \Dk_{2}
    (\Df_{2}-\Df_{1})
    \Lf
    +
    \Df_{1}
    (\Dk_{2}+\Df_{2})
    \Len
  }
  >0
  \,,
\end{equation*}
where the last inequality follows from elementary reasoning.

\begin{figure}
\begin{picture}(80,140)(-10,0)
\includegraphics[width=0.4\textwidth]{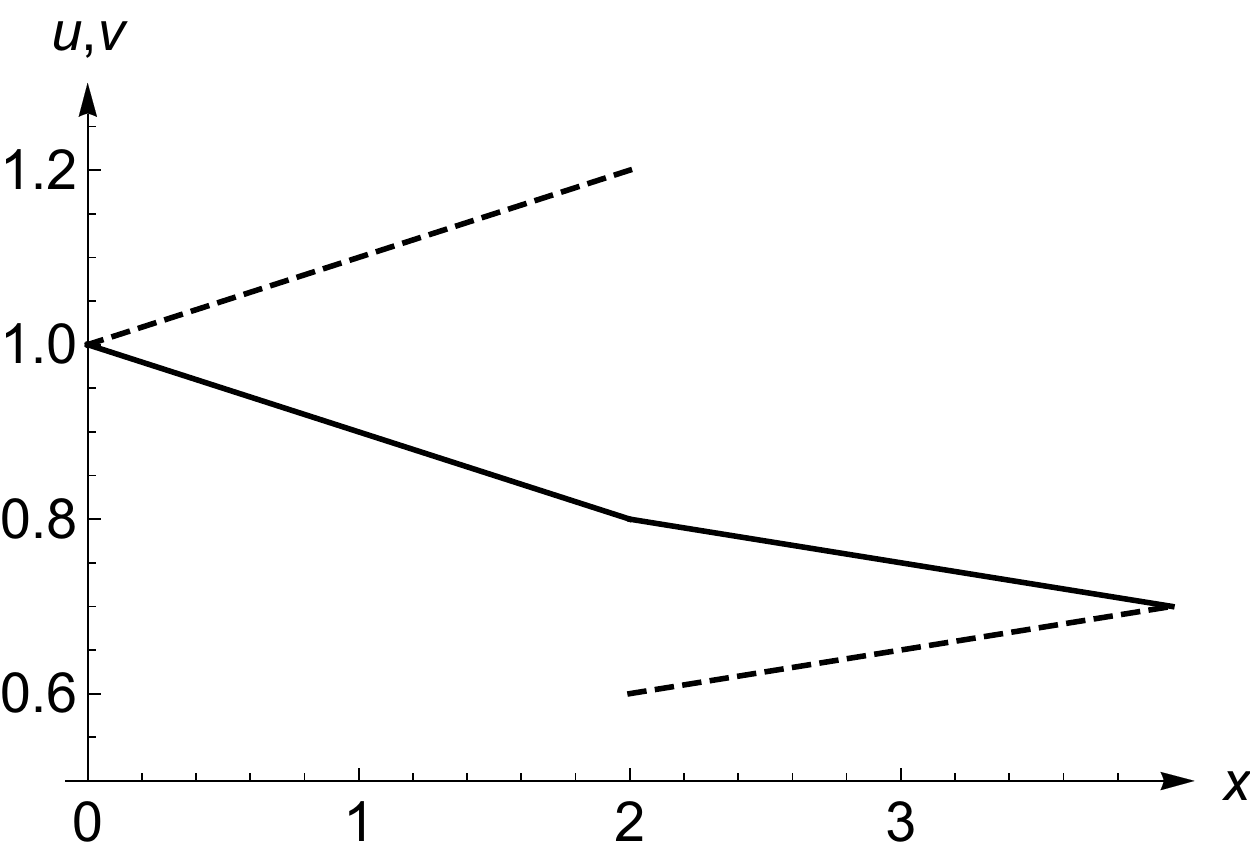}
\hskip 1. cm
\includegraphics[width=0.4\textwidth]{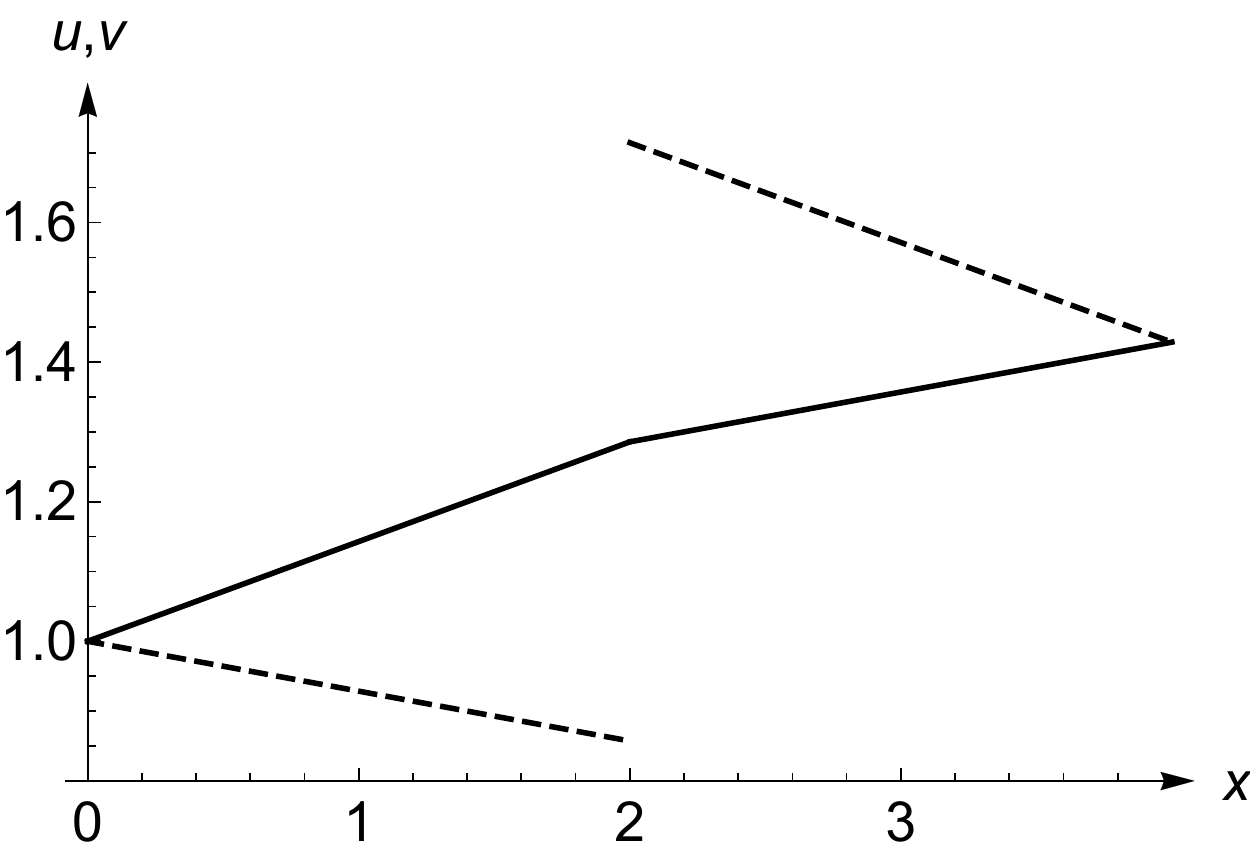}
\end{picture}
\caption{Functions $v$ (continuous line) and $u$ (dashed line)
for
$\Len=4$, $\Lf=\Lk=2$, $\Df_{1}=1$, and $\Df_{2}=2$
with
$\Dk_{1}=1$ and $\Dk_{2}=2$ on the left and
$\Dk_{1}=2$ and  $\Dk_{2}=1$ on the right.}
  \label{f:two}
\end{figure}

\begin{remark}
  \label{r:trivial}
  If $\Dk_{1}=\Dk_{2}$ one can see easily that the solution is flat, that is $\uf(x)=\uk(x)=1$ for all $x\in(0,\Len)$.
  This is a special case of next Remark~\ref{r:fick2fick}.

  Instead the relative values of $\Df_{1}$, $\Df_{2}$ do not seem to play any special role.
\end{remark}

\begin{remark}
  \label{r:fick2fick}
  If one assumes for $\uk$ a Ficksian equation similar to the one solved by $\uf$, it follows immediately that $\uf(x)=\uk(x)=1$ for all $x\in(0,\Len)$: indeed since both $\uf$ and $\uk$ are continuous and piecewise linear, and then monotonic, they share their minimum and maximum values, at the endpoints. But there their fluxes are opposite in sign, and must therefore actually vanish, yielding the claim.
\end{remark}

\begin{remark}
  \label{r:partition}
  If we replace the conditions \eqref{eq:conc_0}, \eqref{eq:conc_L} with the partition type balances
  \begin{align}
    \label{eq:conc_0_part}
    \uf(0)
    &=
    \Dk\uk(0)
    \,,
    \\
    \label{eq:conc_L_part}
    \uf(\Len)
    &=
    \Dk\uk(\Len)
    \,,
  \end{align}
  it can be immediately seen that setting $\tilde\uk=\Dk\uk$ we obtain for $\uf$, $\tilde\uk$ a problem with two equations of Fick type; more exactly we are in the case of Remark~\ref{r:fick2fick} with the diffusivity in the equation for $\tilde\uk$ being identically $1$. Then we have
  \begin{equation}
    \label{eq:partition}
    \uf(x)
    =
    1
    \,,
    \qquad
    \Dk(x)
    \uk(x)
    =
    1
    \,,
    \qquad
    x\in (0,L)
    \,.
  \end{equation}
  On the other hand, conditions \eqref{eq:conc_0_part} are comparable to \eqref{eq:cint_k}; that is they are the conditions we would expect if the whole system was subject to the equation
  \begin{equation*}
    -
    (K(x)(D(x)U(x))')'
    =
    0
    \,,
  \end{equation*}
  with the suitable choices of $K$, $D$.
\end{remark}

%%\def\acknowledgementname{Funding}
%\textbf{Acknowledgements}
%The authors thank E.\ Presutti, A.\ De Masi, B.\ Scoppola, D.\ Gabrielli and
%C.\ Landim for many discussions and clarifying remarks.

%\bibliographystyle{...}
%\bibliography{...}

\end{document}